\renewcommand{\d}{\,\mathrm{d}}
\newcommand{\dd}{\mathrm{d}}
\newcommand{\E}{\mathbb{E}}
\renewcommand{\P}{\mathbb{P}}
\newcommand{\UE}{\overline{\mathbb{E}}}
\newcommand{\UP}{\overline{\mathbb{P}}}
\newcommand{\R}{\mathbb{R}}
\newcommand{\K}{\mathcal{K}}
\newcommand{\Loss}{\mathrm{Loss}}
\newcommand{\FFF}{\mathcal{F}}
\theoremstyle{plain}
\newtheorem{theorem}{Theorem}[section]
\newtheorem{corollary}[theorem]{Corollary}
\newtheorem{lemma}[theorem]{Lemma}
\newtheorem{proposition}[theorem]{Proposition}
\theoremstyle{definition}
\newtheorem{protocol}[theorem]{Protocol}
\theoremstyle{remark}
\newtheorem{remark}[theorem]{Remark}
\newlength{\IndentI}
\newlength{\IndentII}
\newlength{\IndentIII}
\newlength{\IndentIV}
\newlength{\IndentV}
\newcommand{\indentI}{\noindent\hspace*{\IndentI}}
\newcommand{\indentII}{\noindent\hspace*{\IndentII}}
\newcommand{\indentIII}{\noindent\hspace*{\IndentIII}}
\newcommand{\indentIV}{\noindent\hspace*{\IndentIV}}
\newcommand{\indentV}{\noindent\hspace*{\IndentV}}
\newcommand{\abstr}{%
  It is well known that a Bayesian probability forecast for all future observations
  should be a probability measure in order to satisfy a natural condition of coherence.
  The main topics of this paper are the evolution of the Bayesian probability measure
  and ways of testing its adequacy as it evolves over time.
  The process of testing evolving Bayesian beliefs is modelled in terms of betting,
  similarly to the standard Dutch book treatment of coherence.
  The resulting picture is adapted to forecasting several steps ahead
  and making almost optimal decisions.}
\title{The diachronic Bayesian}
\author{Vladimir Vovk}
\begin{document}
\maketitle
\begin{abstract}
  \smallskip
  \abstr

  The version of this paper at \url{http://probabilityandfinance.com} (Working Paper 64)
  is updated most often.
\end{abstract}

\section{Introduction}

Consider a Bayesian forecaster predicting future observations.
Two standard examples, which can be considered as the opposite ends of a spectrum,
are where the observations are outcomes of coin tosses (for a possibly biased coin)
and where the observations are ``dry'' or ``rain'' for a number of consecutive days.
Let us take the standard Bayesian position,
due to \citet{deFinetti:1937,deFinetti:2017}
and discussed by, e.g., \citet[Sect.~4.1]{Bernardo/Smith:2000},
that the Bayesian's beliefs about the future observations should be encoded as a probability measure
on the sequences of observations.

A fundamental role in de Finetti's theory is played by the requirement of \emph{coherence}:
if the Bayesian's beliefs do not form a probability measure,
we can set a ``Dutch book'' against him,
which is a system of bets leading to his sure loss.
We will also be interested in a stronger property,
agreement with reality,
in which sure loss is replaced by a substantial gain for an opponent
who follows a strategy that can ever lose only a tiny amount.
Both coherence and agreement with reality are defined in terms of betting.

Therefore, we assume that at each point in time
the Bayesian has a probability measure representing his beliefs for the future observations.
In this paper we are interested in how the Bayesian's probability measure changes over time.
A standard simple answer is that we should include in our prediction picture
\textbf{all} information that the Bayesian gets,
and then we should condition on the new information
in the usual sense of probability theory \citep[Sect.~I.4]{Kolmogorov:1933-Latin}.
This procedure for updating the Bayesian's beliefs is known as ``Bayesian conditioning''
\citep{Bayes:1764,Shafer:1982,Shafer:2022event}.
The principle that the new observations  must be the only thing
the Bayesian has learned is the \emph{principle of total evidence} \citep{Shafer:1985},
and it is often regarded as uncontroversial.
\citet{Lewis:1999} derives Bayesian conditioning (as updating rule)
via his diachronic Dutch book result,
which implicitly relies on the principle of total evidence.
In Sect.~\ref{sec:basic} we will discuss the narrowness of the principle of total evidence
and, therefore, of Bayesian conditioning.
While it may be convincing in coin-type situations, it is not in weather-type ones
(cf.\ the first paragraph of this section).
The main mathematical observation of Sect.~\ref{sec:basic}
is that Lewis's requirement of no diachronic Dutch book
does not impose any restrictions on the forecasts at different times,
so for discussing the diachronic aspects of Bayesian forecasting
we need a stronger requirement.

Section~\ref{sec:test} proposes a testing protocol
based on betting for the evolving Bayesian probability measure.
This protocol is given in terms of observables and does not depend on Bayesian conditioning.
It is very much in the spirit of the work on game-theoretic probability
\citep{Shafer/Vovk:2019,Dawid/Vovk:1999,Vovk:1993}
and the recent RSS discussion papers
by \citet{Shafer:2021}, \citet{Waudby-Smith/Ramdas:2024}, and \citet{Grunwald/etal:2024}
promoting game-theoretic statistics.

A discussion of connections
with the standard measure-theoretic picture of probability and statistics
follows in Sect.~\ref{sec:embedding}.
The measure-theoretic picture will typically be an imaginary picture
that does involve Bayesian conditioning
(which may be happening deeply inside the imaginary picture, far from what we can observe).
The main finding of this section
is the equivalence of the game-theoretic and measure-theoretic pictures
for finite probability spaces.

Section~\ref{sec:K} adapts the testing protocol of Sect.~\ref{sec:test}
to predicting $K$ steps ahead,
which generalizes the case $K=1$ considered earlier in game-theoretic probability
(e.g., \citealp{Shafer/Vovk:2019,Dawid/Vovk:1999,Vovk:1993}).
Section~\ref{sec:DM} applies the $K$-steps-ahead testing protocol
to making nearly optimal decisions,
and Sect.~\ref{sec:conclusion} concludes.

Appendixes~\ref{app:proofs}--\ref{app:martingale-SLLN}
provide further information.
The key one is Appendix~\ref{app:proofs} giving the proofs.
Appendix~\ref{app:ideal} discusses a seemingly more general (but in fact equivalent)
testing procedure,
and Appendix~\ref{app:radical} discusses Jeffrey's radical probabilism
in the context of testing diachronic Bayesian predictions.

This paper has been inspired by Philip Dawid's brief discussion of one-step-ahead prediction
\citep[Sect.~7]{Vovk/Shafer:2023},
and its title is adapted from \citet{Dawid:1982Bayesian}
(being well-calibrated is an important aspect, namely the frequency aspect, of agreement with reality).
Its other important source is Dawid's super-strong prequential principle
\citep[Sect.~5.2]{Dawid/Vovk:1999}.
This principle requires that our testing protocol based on betting
must agree with the measure-theoretic picture, regardless of the imagined data-generating distribution.

\subsection{Predictivism}

A very common Bayesian picture is the more complicated one where,
instead of one probability measure $P$ over the observations,
the Bayesian's beliefs are modelled as a statistical model
$\{P_{\theta}\mid\theta\in\Theta\}$
combined with a prior probability measure $\mu$ on $\Theta$.
A standard Bayesian point of view \citep[Chap.~4]{Bernardo/Smith:2000}
is that we should start from one probability measure $P$ and then,
if this is more convenient, e.g., mathematically,
represent it as integral $P=\int P_{\theta}\mu(\dd\theta)$.
An example is the application of de Finetti's theorem to coin tossing,
guaranteeing that any exchangeable probability measure $P$
can be represented as a mixture $\int P_{\theta}\mu(\dd\theta)$
of probability measures $P_{\theta}$ corresponding
to independent and identically distributed observations.
In Lindley's words,
``We should be concentrating not on Greek letters but on the Roman letters''
(i.e., not on $\theta$s, parameter values, but on the $x$s and $y$s, observables)
\citep[Sect.~7]{Vovk/Shafer:2023}.
This view is sometimes called predictivism \citep{Wechsler:1993}.

\subsection{Notation and terminology}

If $a$ and $b$ are finite sequences (of some elements),
we write $a\subseteq b$ to mean that $a$ is a prefix of $b$,
and we write $a\subset b$ to mean that $a\subseteq b$ and $a\ne b$.
If $a\subseteq b$,
$b\setminus a$ is the sequence obtained from $b$ by crossing out its prefix $a$.
The concatenation of $a$ and $b$ is written simply as $ab$;
we use the same notation when $a$ or $b$ (or both) are elements;
if $B$ is a set of elements or finite sequences, $aB$ stands for $\{ab\mid b\in B\}$.
The length of a finite sequence $a$ is denoted by $\lvert a\rvert$.

If $a$ and $b$ are numbers, $a\wedge b:=\min(a,b)$.

In our terminology we will mainly follow \citet{Shiryaev:2016,Shiryaev:2019}.
A \emph{finite probability space} is a pair $(\Omega,P)$,
where $\Omega$ is a finite set,
implicitly equipped with the $\sigma$-algebra $\FFF$ of all subsets of $\Omega$,
and $P$ is a probability measure on $(\Omega,\FFF)$.
Let us say that $(\Omega,P)$ is \emph{positive}
if the probability of each sample point is positive,
$P(\{\omega\})>0$ for all $\omega\in\Omega$.
We let $\E_P$ stand for the expected value under $P$.

We will also use the following notation:
\begin{itemize}
\item
  $\mathfrak{P}(A)$ is the set of all positive probability measures
  on a finite set $A$;
\item
  $\R^A$ is the set of all functions $f:A\to\R$,
  $\R$ being the real line;
\item
  if $P\in\mathfrak{P}(\mathbf{Y}^K)$ and $x\in\mathbf{Y}^k$ for $k\le K$,
  \[
    P(x)
    :=
    P(x\mathbf{Y}^{K-k});
  \]
  in this paper we use this notation, and $P(x'\mid x)$ introduced next,
  for a finite~$\mathbf{Y}$;
\item
  if $P\in\mathfrak{P}(\mathbf{Y}^K)$, $x\in\mathbf{Y}^k$, and $x'\in\mathbf{Y}^{k'}$
  for $k+k'\le K$,
  \[
    P(x'\mid x)
    :=
    \frac{P(x x')}{P(x)}
  \]
  (under our definitions the denominator will always be positive).
\end{itemize}

A \emph{filtration} $(\FFF_n)$ in a finite probability space $(\Omega,P)$,
where $n$ ranges over a contiguous set of integers, 
is an increasing sequence of $\sigma$-algebras in $\Omega$,
$\FFF_{n_1}\subseteq\FFF_{n_2}$ when $n_1\le n_2$.
We say that a sequence $(Y_n)$ of random variables in $(\Omega,P)$ is \emph{adapted}
if $Y_n$ is $\FFF_n$-measurable for all $n$;
it is \emph{predictable} if $Y_n$ is $\FFF_{n-1}$-measurable for all $n$.

In this paper we mainly concentrate on finite probability spaces.
One subtlety of de Finetti's views is that the requirement of coherence
only implies finite additivity and not countable additivity
(see, e.g., \citealt[Sect.~18.3]{deFinetti:2017}, and \citealt[Sect.~3.5.2]{Bernardo/Smith:2000}).
In the finite case, however,
the difference between finite and countable additivity disappears.
(Starting from a finite case is standard in probability theory;
see, e.g., \citealt[Chap.~I]{Kolmogorov:1933-Latin}, and \citealt[Chap.~1]{Shiryaev:2016}.)

Coherence is a property of consistency of the Bayesian's beliefs,
so we could call it internal coherence.
The English word ``coherence'' may cover not only internal coherence
but also agreement with reality (a kind of external coherence).
For example,
one of the earliest abstract uses of ``coherence'' given in the Oxford English Dictionary
(as of March 2024) is from Abraham Fraunce's ``The lawiers logike'' (1588):
``Where there is a greater coh{\ae}rence and affinitie betweene the argument and the thing argued''.
In this paper, however, we will use ``coherence'' in its meaning of internal coherence,
which is standard in Bayesian statistics,
except for a short discussion in Sect.~\ref{subsec:weakness}.

In this paper I will ignore any distinctions that are sometimes made
between ``forecast'' and ``prediction''
(such as predictions being more categorical than forecasts)
and will regard these words as synonymous.
I will never use more exotic words such as ``prevision''
\citep[Sect.~3.1.2]{deFinetti:2017}.

Following \citet{Shafer/Vovk:2001,Shafer/Vovk:2019} I use ``game-theoretic''
to refer to being based on betting,
and the kind of game theory involved here is the theory of perfect-information games
rather than the probabilistic games studied in, e.g., economics
(see, e.g., \citealt[Sect.~4.5]{Shafer/Vovk:2019}).

\subsection{Dramatis personae}

These are the players in our prediction protocols
(most of the protocols involve subsets of players).
\begin{itemize}
\item
  Reality (female):
  player who chooses sequential observations $y_1,y_2,\dots$,
  which are elements of the observation space $\mathbf{Y}$
  (assumed finite).
\item
  Forecaster (male):
  player who issues probabilistic forecasts for the future observations.
\item
  Sceptic (male):
  player who gambles against Forecaster's predictions.
  Informally, he is trying to discredit Forecaster.
\item
  Decision Maker (female):
  player who makes decisions in light of Forecaster's predictions.
\end{itemize}
The players' sexes are defined in \citet{Shafer/Vovk:2019}.
The noun ``Bayesian'' will often be used as nearly synonymous with ``Forecaster'',
and so the Bayesian is male.

\section{Basic prediction picture}
\label{sec:basic}

We are interested in the following sequential \emph{Bayesian prediction protocol}.

\begin{protocol}\label{prot:joint}\ \\
  \indentI FOR $n=1,\dots,N$:\\
    \indentII Forecaster announces $P_n\in\mathfrak{P}(\mathbf{Y}^{N-n+1})$\\
    \indentII Reality announces the actual observation $y_n\in\mathbf{Y}$.
\end{protocol}

\noindent
In this paper we only consider the case of a finite \emph{time horizon} $N>1$.
At each step $n$,
$P_n$ is a prediction for the whole future $y_n y_{n+1}\dots y_N$
(sequence of length $N-n+1$).
Earlier we referred to Forecaster as ``Bayesian'',
in order to emphasize that his predictions are complete probability measures
over the future observations
(while in earlier work we often considered less complete predictions:
see, e.g., \citealt[Preface, point 2]{Shafer/Vovk:2019}).

Protocol~\ref{prot:joint} does not define a game,
since we have not specified the players' goals,
but we will often talk about the \emph{plays}
$P_1 y_1 \dots P_N y_N$ proceeding according to the protocol's rules.

Let us assume, for simplicity,
that the set $\mathbf{Y}$ is finite
(and equipped with the discrete $\sigma$-algebra);
this will allow us to concentrate of conceptual issues
avoiding technical difficulties and ambiguities
(such as countable vs finite additivity).
To exclude trivialities, we also assume $\lvert\mathbf{Y}\rvert>1$.

In addition, we impose the requirement that $P_n(E)>0$ unless $E=\emptyset$.
This is a version of Lindley's ``Cromwell's rule'' \citep[Sect.~6.7]{Lindley:1985}.

\subsection{Bayesian conditioning}

Protocol~\ref{prot:joint} goes beyond \emph{Bayesian conditioning},
where we insist that, for each $n\ge2$,
\begin{equation}\label{eq:Bayesian-conditioning}
  P_{n}(x)
  =
  P_{n-1}(x\mid y_{n-1})
  :=
  \frac{P_{n-1}(y_{n-1}x)}{P_{n-1}(y_{n-1})},
  \quad
  x\in\mathbf{Y}^{N-n+1}.
\end{equation}
Bayesian conditioning (as rule for updating beliefs)
was criticised by \citet{Hacking:1967},
since the rule ignores the cost of thinking.
\citet{Lewis:1999} points out that
``we should sometimes respond to conceptual discoveries by revising our beliefs''.
However, the most straightforward reason for violating \eqref{eq:Bayesian-conditioning}
is that at step $n$ Forecaster can also learn other information apart from $y_n$
(i.e., learn information outside the protocol);
see \citet{Shafer:1985}.

Let us give an example where Bayesian conditioning,
based on the principle of total evidence,
is utterly unrealistic as an updating rule:
we can't hope to have a comprehensive protocol
including all the information a real-life Bayesian has access to.
Consider the standard case \citep{Dawid:1982Bayesian,Dawid:2006} of a weather forecaster
who issues a probability for the rain on sequentially numbered days.
The observations are the actual outcomes,
say 0 or 1 (encoding a dry or rainy day).
In the morning of day 1 the forecaster announces a joint probability
for the future observations (for days $1,2,\dots$) as his forecast,
and in the morning of day 2 he announces a new forecast, for days $2,3,\dots$.
We can't assume that the 0/1 observation on day 1 is all the extra information
that he has in the morning of day 2:
a serious weather forecaster, such as the UK Met Office,
will have plenty of other information arriving from weather stations around the globe
(and even from outer space).
This is a common situation;
to quote \citet[Sect.~4]{Goldstein:1983},
``in most cases of interest (e.g., the doctor's examination of the patient)
it is unreasonable to suppose that, even in principle,
there is a partition of possibilities
over which probabilities and conditional probabilities could, in theory, be defined.''
We will sometimes use the notation $\FFF_{n-1}$ (formally this is a $\sigma$-algebra)
for the information available when making the prediction $P_n$ at time $n$,
albeit in many cases this will be an unmanageable notion
that is even difficult to imagine
(while the moves in our protocols will be observable).

\begin{remark}\label{rem:K}
  Obviously,
  we can't include the data arriving from weather stations around the globe
  in a realistic prediction protocol,
  but we can go further and argue that even our picture is unrealistic
  for a large time horizon $N$:
  e.g., the first probability measure $P_1\in\mathfrak{P}(\mathbf{Y}^N)$ specifies
  $\lvert\mathbf{Y}\rvert^N-1$
  independent parameters,
  and this number grows exponentially in $N$ even for $\lvert\mathbf{Y}\rvert=2$.
  In Sect.~\ref{sec:K} we consider a more realistic setting
  of forecasting $K$ steps ahead
  (such as a week ahead for $K=7$).
\end{remark}

\begin{remark}
  Another reason why we might want to consider a Bayesian
  violating Bayesian conditioning when updating his beliefs
  is that his computational resources might be limited:
  he might keep processing information already available at the previous steps
  obtaining new values for probabilities of the same events.
  This is a special case of Hacking's (\citeyear{Hacking:1967}) observation mentioned earlier.
\end{remark}

\subsection{Weakness of coherence in the diachronic picture}
\label{subsec:weakness}

The following proposition
(proved in Sect.~\ref{subsec:coherence-useless} of Appendix~\ref{app:proofs})
shows that there is no reason to expect
any connection between the forecasts $P_n$
in Protocol~\ref{prot:joint}
if we only assume a natural diachronic modification of coherence.
Essentially this modification was used by \citet[p.~406]{Lewis:1999};
lack of coherence becomes, in Lewis's words,
``a risk of loss uncompensated by any chance of gain''.

\begin{proposition}\label{prop:coherence-useless}
  For any sequence of outcomes $y_1,\dots,y_N\in\mathbf{Y}$ and
  any sequence of probability measures $P_n\in\mathfrak{P}(\mathbf{Y}^{N-n+1})$,
  $n=1,\dots,N$,
  there is a positive finite probability space $(\Omega,P)$
  with filtration $\FFF_0,\dots,\FFF_N$
  and an adapted sequence of $\mathbf{Y}$-valued random elements $Y_1,\dots,Y_n$
  such that the event
  \[
    \forall n\in\{1,\dots,N\}:
    P_n=P(\cdot\mid\FFF_{n-1}) \;\&\; Y_n=y_n
  \]
  has a positive probability.
\end{proposition}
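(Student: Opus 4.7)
The plan is to construct $(\Omega, P)$ explicitly by enlarging $\mathbf{Y}^N$ with an auxiliary binary ``flag'' $W_n\in\{0,1\}$ at each intermediate step, arranged so that on a distinguished atom of $\FFF_{n-1}$ the conditional distribution of the future observations is exactly $P_n$. I would take $\Omega := \mathbf{Y}^N \times \{0,1\}^{N-1}$ with coordinate projections $Y_1,\dots,Y_N$ and $W_1,\dots,W_{N-1}$, and set $\FFF_0 := \{\emptyset,\Omega\}$, $\FFF_n := \sigma(Y_1, W_1,\dots,Y_n, W_n)$ for $1\le n<N$, and $\FFF_N := \sigma(\FFF_{N-1},Y_N)$; then $(Y_n)$ is automatically adapted. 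Writing $A_n := \{Y_1 = y_1,\, W_1 = 1, \dots,\, Y_{n-1} = y_{n-1},\, W_{n-1} = 1\}$ for the distinguished atom of $\FFF_{n-1}$ (with $A_1 := \Omega$), the aim is to build $P$ so that the conditional distribution of $(Y_n, \dots, Y_N)$ under $P$ given $A_n$ equals $P_n$ for every $n$. The sample point $\omega^* := (y_1,\dots,y_N,1,\dots,1)$ then lies in every $A_n$, satisfies $Y_n(\omega^*) = y_n$, and witnesses the event in the proposition.

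I would define $P$ by forward induction on $n$. The base case $n=1$ just declares the marginal of $(Y_1,\dots,Y_N)$ to be $P_1$. For the inductive step, assuming the conditional law of $(Y_n,\dots,Y_N)$ on $A_n$ is already pinned down as $P_n$, I specify, conditionally on $A_n \cap \{Y_n = y_n\}$, that $W_n = 1$ with probability $\lambda_n \in (0,1)$ and $W_n = 0$ with probability $1 - \lambda_n$; on $\{W_n = 1\}$ the conditional distribution of $(Y_{n+1}, \dots, Y_N)$ is declared to be $P_{n+1}$, while on $\{W_n = 0\}$ it is forced to be the residual measure
\[
R_n := \frac{P_n(\cdot \mid Y_n = y_n) - \lambda_n P_{n+1}}{1 - \lambda_n},
\]
so that the total conditional law on $A_n \cap \{Y_n = y_n\}$ still equals $P_n(\cdot \mid Y_n = y_n)$, consistent with the inductive hypothesis. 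Off this path---when $Y_n \ne y_n$ or some earlier $W_i = 0$---I extend $P$ by any strictly positive but otherwise arbitrary choice of conditional distributions for the remaining variables.

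The main obstacle is the positivity bookkeeping at the inductive step: I need $R_n$ to be a positive probability measure on $\mathbf{Y}^{N-n}$. By Cromwell's rule the conditional measure $P_n(\cdot \mid Y_n = y_n)$ is strictly positive on every atom, while $P_{n+1}$ is a probability measure, so any $\lambda_n$ satisfying $0 < \lambda_n < \min_{a\in\mathbf{Y}^{N-n}} P_n(a \mid Y_n = y_n)/P_{n+1}(a)$ makes $R_n$ a positive probability measure; such a $\lambda_n$ is always available because both the numerator and the denominator are strictly positive on a finite set. The off-path extensions are chosen strictly positive by construction, so $(\Omega, P)$ is positive. Unwinding the induction yields $P_n = P(\cdot \mid \FFF_{n-1})$ on the whole atom $A_n$, and since $\omega^* \in \bigcap_n A_n$ has strictly positive $P$-mass, the event in the proposition has positive probability.
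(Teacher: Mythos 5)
Your proof is correct and is essentially the paper's own argument: your binary flags $W_n$ play exactly the role of the paper's splitting of each sample point $y_n\omega_{n+1}\dots\omega_N$ into a primed and a double-primed copy, with your $\lambda_n$ corresponding to the paper's $\epsilon$ and your residual measure $R_n$ to the conditional law the paper assigns to the double-primed branch. The key step---choosing the split probability small enough that the residual remains a positive probability measure, which is possible by Cromwell's rule on a finite set---is identical in both.
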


Remember that the Bayesian is incoherent
if we can set a system of bets under which he always loses
(a Dutch book).
This notion is not applicable in the diachronic setting
since we learn the full sequence $P_1,\dots,P_N$
only at the very end of the forecasting session, when it is too late to bet.
But we can apply Lewis's modification.
Let us say that the Bayesian (Forecaster in our protocol)
is \emph{dynamically incoherent} in a particular play
if there is a way of betting against him that never leads to Sceptic's loss
but, in this particular play, leads to Sceptic's gain.
In other words, it's a gain that is not compensated by a potential loss;
we will call it a \emph{gratis gain}.

Proposition~\ref{prop:coherence-useless} says that Forecaster
is never dynamically incoherent
provided the bets are fair under every possible $P$.
This is true under any strategy for probability updating
(or in the absence of such a strategy).
There cannot be any diachronic inconsistency between $P_n$ for different $n$
leading to a gratis gain for Sceptic.
In the following section we will see that such inconsistency
can lead to an \textbf{almost} gratis gain for Sceptic.
See also Remark~\ref{rem:hidden} below.

\section{Testing probability forecasts}
\label{sec:test}

Long-term prediction is much more complicated
than one-step-ahead prediction,
and to have a clear understanding of the process we will use two pictures,
which we call game-theoretic and measure-theoretic (following \citealt{Shafer/Vovk:2019}).
The game-theoretic picture is based on betting,
as in \citet{deFinetti:1937}.
In this section we discuss the game-theoretic picture,
and in the next one (Sect.~\ref{sec:embedding})
we move on to the measure-theoretic picture.

In the game-theoretic picture we add a third player, Sceptic, to the basic forecasting protocol.
``Sceptic'' is just our name for the bettor,
and betting proceeds according to the following protocol
(the intuition behind this protocol will be explained shortly).

\begin{protocol}\label{prot:joint-test-final}\ \\
  \indentI $\K_0 := 1$\\
  \indentI FOR $n=1,\dots,N$:\\
    \indentII Forecaster announces $P_n\in\mathfrak{P}(\mathbf{Y}^{N-n+1})$\\
    \indentII IF $n>1$:\\
    \indentIII $\K_{n-1} := \K_{n-2}
        + \sum_{x\in\mathbf{Y}^{N-n+1}} f_{n-1}(y_{n-1}x) P_{n}(x)$\\
        \indentV ${}- \sum_{x\in\mathbf{Y}^{N-n+2}} f_{n-1}(x) P_{n-1}(x)$%
        \hfill\refstepcounter{equation}(\theequation)\label{eq:joint-K-P-2}\\
    \indentII Sceptic announces $f_n\in\R^{\mathbf{Y}^{N-n+1}}$\\
    \indentII Reality announces $y_n\in\mathbf{Y}$\\
    \indentII IF $n=N$:\\
      \indentIII $\K_n := \K_{n-1} + f_n(y_n) - \sum_{y\in\mathbf{Y}} f_n(y) P_n(y)$.
          \hfill\refstepcounter{equation}(\theequation)\label{eq:joint-K-y-2}
\end{protocol}

\noindent
Sceptic's capital is not allowed to become negative
(as soon as it does, the play is stopped and Sceptic loses).
We regard this protocol (and similar protocols below)
as a way of testing Forecaster's predictions:
a large $\K_N$ means lack of agreement of his predictions with reality.
When for a particular play $\K_N$ is large,
we can regard it as an almost gratis gain.
(This assumes that Sceptic's capital is measured in small monetary units,
but we can always scale it up if the monetary units are not small.)

The interpretation of Protocol~\ref{prot:joint-test-final}
is that at each step $n$ Forecaster announces,
in the spirit of \citet[Chap.~3]{deFinetti:2017},
the price $P_n(x)$ for the uncertain quantity
$1_{\{(y_n,\dots,y_N)=x\}}$
for each $x\in\mathbf{Y}^{N-n+1}$.
We imagine a ticket that pays $1_{\{(y_n,\dots,y_N)=x\}}$
at the end of the play,
and so $P_n(x)$ is Forecaster's price
(at which he prepared to sell and to buy) for this ticket,
which we will call \emph{ticket $x$}.
After Forecaster's move $P_n$ Sceptic buys,
for each $x\in\mathbf{Y}^{N-n+1}$,
some tickets from Forecaster,
and $f_n(x)$ stands for the number of tickets $x$ that he chooses to buy
(the number can be positive, negative, or zero).
Sceptic pays $\sum_{x\in\mathbf{Y}^{N-n+1}} f_{n}(x) P_{n}(x)$
for the transaction.
If $n$ is not the last step,
at the next step Forecaster announces a new $P_n$,
and Sceptic sells all his tickets at the new prices.
Then Sceptic buys a new set of tickets at the new prices, etc.
(An important special case is where the new set of tickets
coincides with the old set,
so effectively no trade happens and Sceptic just keeps the old set.)
The change in his capital at step $n<N$ is summarized in \eqref{eq:joint-K-P-2}:
\begin{itemize}
\item
  $\sum_{x\in\mathbf{Y}^{N-n+1}} f_{n-1}(y_{n-1}x) P_{n}(x)$
  is the amount he gains at this step
  by selling, at the current prices $P_n$, the tickets that he bought at the previous step;
  notice that only tickets $y_{n-1}x$ will have non-zero prices;
\item
  $\sum_{x\in\mathbf{Y}^{N-n+2}} f_{n-1}(x) P_{n-1}(x)$ is the amount
  paid for those tickets at the previous step.
\end{itemize}
At the last step he just cashes in the winning ticket $y_N$:
see \eqref{eq:joint-K-y-2}.

\subsection{Comparison with Bayesian conditioning}

How does \eqref{eq:joint-K-P-2} compare with Bayesian conditioning,
where no new information outside the protocol arrives
and we just define $P_n$ by \eqref{eq:Bayesian-conditioning}?
In this case we can simplify the protocol
by replacing Sceptic's moves $f_n$ with $f'_n:\mathbf{Y}\to\R$ defined by
\begin{equation}\label{eq:f'_n}
  f'_n(y)
  :=
  \sum_{x\in\mathbf{Y}^{N-n}}
  f_n(y x)
  P_n(x\mid y),
\end{equation}
and then \eqref{eq:joint-K-P-2} becomes
\[
  \K_{n-1}
  :=
  \K_{n-2}
  +
  f'_{n-1}(y_{n-1})
  -
  \sum_{y\in\mathbf{Y}} f'_{n-1}(y) P_{n-1}(y).
\]
Moving this command to the previous step,
we can rewrite Protocol~\ref{prot:joint-test-final} as
\begin{protocol}\label{prot:joint-test-Bayesian}\ \\
  \indentI $\K_0 := 1$\\
  \indentI FOR $n=1,\dots,N$:\\
    \indentII IF $n=1$:\\
      \indentIII Forecaster announces $P_1\in\mathfrak{P}(\mathbf{Y}^N)$\\
    \indentII ELSE:\\
      \indentIII Forecaster updates $P_{n-1}\in\mathfrak{P}(\mathbf{Y}^{N-n+2})$
        to $P_n\in\mathfrak{P}(\mathbf{Y}^{N-n+1})$\\
        \indentV by Bayesian conditioning~\eqref{eq:Bayesian-conditioning}\\
    \indentII Sceptic announces $f'_n\in\R^{\mathbf{Y}}$\\
    \indentII Reality announces $y_n\in\mathbf{Y}$\\
    \indentII $\K_n := \K_{n-1} + f'_n(y_n) - \sum_{y\in\mathbf{Y}} f'_n(y) P_n(y)$.
\end{protocol}
This is our standard one-step-ahead prediction protocol
(cf., e.g., \citealt[Protocol~1.1]{Shafer/Vovk:2019})
except that Forecaster announces his forecasting strategy in advance.
We can see that forecasting multiple steps ahead
does not require any new methods under Bayesian conditioning:
testing can proceed one step ahead.

Without any restrictions on Forecaster,
we obtain, instead of Protocol~\ref{prot:joint-test-Bayesian},
the following protocol equivalent to Protocol~\ref{prot:joint-test-final},
in which we still use the notation \eqref{eq:f'_n}.

\begin{protocol}\label{prot:conditioning-prot}\ \\
  \indentI $\K_0 := 1$\\
  \indentI FOR $n=1,\dots,N$:\\
    \indentII Forecaster announces $P_n\in\mathfrak{P}(\mathbf{Y}^{N-n+1})$\\
    \indentII IF $n>1$:\\
      \indentIII  $\K_{n-1} := \K_{n-1}
        + \sum_{x\in\mathbf{Y}^{N-n+1}} f_{n-1}(y_{n-1}x) (P_{n}(x)-P_{n-1}(x\mid y_{n-1}))$\\
    \indentII Sceptic announces $f_n\in\R^{\mathbf{Y}^{N-n+1}}$\\
    \indentII Reality announces $y_n\in\mathbf{Y}$\\
    \indentII $\K_n := \K_{n-1} + f'_n(y_n) - \sum_{y\in\mathbf{Y}} f'_n(y) P_n(y)$.
\end{protocol}

Protocol~\ref{prot:conditioning-prot} adds to Protocol~\ref{prot:joint-test-Bayesian}
the possibility to update $P_n$ is a way different from Bayesian conditioning
and includes a term that describes betting on the difference
between the actual forecast $P_n(x)$
and the Bayesian conditional probabilities $P_{n-1}(x\mid y_{n-1})$
computed from the previous forecast.
The equivalence of Protocols~\ref{prot:joint-test-final} and~\ref{prot:conditioning-prot}
follows from the equality, for $n<N$,
of the addend $f'_n(y_n)$ in the expression for $\K_n$
in Protocol~\ref{prot:conditioning-prot}
and the subtrahend
$\sum_{x\in\mathbf{Y}^{N-n+1}} f_{n-1}(y_{n-1}x) P_{n-1}(x\mid y_{n-1})$
in the expression for $\K_{n-1}$ at the next step.

\subsection{Merging Sceptic's opponents}

If we are only interested in strategies for Sceptic
(not in strategies for other players,
as in \citealt[Preface, ideas 3 and 6]{Shafer/Vovk:2019})
we can simplify Protocol~\ref{prot:joint-test-final} further
by merging Forecaster and Reality.
We will refer to the combined player as Forecaster
(rather than World, as in \citealt{Shafer/Vovk:2001,Shafer/Vovk:2019});
the reason for this will become clear
in Sect.~\ref{subsec:radical-additive} in Appendix~\ref{app:radical}.

\begin{protocol}\label{prot:merged}\ \\
  \indentI $\K_0 := 1$\\
  \indentI FOR $n=1,\dots,N,N+1$:\\
    \indentII Forecaster announces $Q_n\in\mathfrak{P}_n(\mathbf{Y}^{N})$\\
    \indentII IF $n>1$:\\
    \indentIII $\K_{n-1} := \K_{n-2}
        + \sum_{x\in\mathbf{Y}^{N}} F_{n-1}(x) (Q_{n}(x) - Q_{n-1}(x))$
        \text{\quad}\hfill\refstepcounter{equation}(\theequation)\label{eq:merged}\\
    \indentII Sceptic announces $F_n\in\R^{\mathbf{Y}^{N}}$.
\end{protocol}

\noindent
Protocol~\ref{prot:merged} uses the notation $\mathfrak{P}_n(\mathbf{Y}^{N})$
for the set of all probability measures $Q$ on $\mathbf{Y}^{N}$
satisfying $Q(x)=1$ for some $x\in\mathbf{Y}^{n-1}$.

To embed Protocol~\ref{prot:joint-test-final} into Protocol~\ref{prot:merged},
we should take as $Q_n$ the extension of $P_n$ to $\mathbf{Y}^N$,
namely
\begin{equation}\label{eq:Q}
  Q_n(x)
  :=
  \begin{cases}
    P_n(x\setminus(y_1\dots y_{n-1})) & \text{if $(y_1\dots y_{n-1})\subseteq x$}\\
    0 & \text{otherwise},
  \end{cases}
\end{equation}
and we should take as $F_n$ the extension of $f_n$ to $\mathbf{Y}^N$,
namely
\[
  F_n(x)
  :=
  \begin{cases}
    f_n(x\setminus(y_1\dots y_{n-1})) & \text{if $(y_1\dots y_{n-1})\subseteq x$}\\
    u & \text{otherwise},
  \end{cases}
\]
where, e.g., $u:=0$
(but in fact the value of $u$ does not matter as it is always multiplied by 0
in the embedded protocol,
and we can use different $u$s for different $n$ and $x$).

Protocol~\ref{prot:merged} lasts for $N+1$ rather than $N$ steps
in order for $\K_N$ to be defined by \eqref{eq:merged}.
Sceptic's last move $F_{N+1}$ is never used.

\section{Measure-theoretic picture}
\label{sec:embedding}

This section may be skimmed or skipped completely;
the remaining sections do not depend on it.

The measure-theoretic picture is stochastic
and assumes an overall probability measure
used by Forecaster and Reality for generating their moves.
In other words, it is just the standard measure-theoretic framework
\citep{Kolmogorov:1933-Latin,Doob:1953}
for probability.
In this section we will define testing in the measure-theoretic picture
and will see that it is equivalent to (albeit more complicated and less natural than)
testing in the game-theoretic picture,
i.e., the testing procedure described in the previous section.

Our check of equivalence will have two sides:
the validity of the game-theoretic picture in the measure-theoretic framework,
and the validity of a natural measure-theoretic picture in the game-theoretic framework.
In our proofs in Appendix~\ref{app:proofs} (Sect.~\ref{subsec:validity})
we will use a standard theorem of duality;
in general, it can be said that the measure-theoretic and game-theoretic pictures
are dual to each other in a certain sense.

What exactly do I mean by equivalence?
The idea is to show that we have identical ways of gambling
in both pictures.
On the measure-theoretic side,
we have the standard notion of measure-theoretic martingale
(defined later in the section),
and we define a \emph{test martingale} as nonnegative martingale
with initial value 1.
On the game-theoretic side,
a \emph{game-theoretic test martingale} is Sceptic's capital $\K_n$
(for all possible $n$),
for a fixed strategy for Sceptic,
as function of Forecaster's and Reality's moves
provided this function is nonnegative.
Roughly, the equivalence means the equivalence of the two notions of test martingale,
but the exact statement will become clear in the process of its demonstration
(when we reach Proposition~\ref{prop:validity}).

\subsection{Validity of the game-theoretic picture}
\label{subsec:GTP-valid}

Let $(\Omega,P)$ be a finite probability space
equipped with a filtration $\FFF_n$, $n=0,1,\dots$.
Intuitively, we regard $\FFF_{n-1}$ as the information
available to Forecaster and Sceptic at the beginning of step $n$
in Protocol~\ref{prot:joint-test-final}.
For concreteness, let us assume
that all new information (including $y_n$, which is part of the new information)
arrives at the end of step $n$
and none arrives between the steps;
therefore, $\FFF_n$, $n=1,2,\dots$, is the information available at the end of step $n$.

In the measure-theoretic framework for Protocol~\ref{prot:joint-test-final},
we assume that $y_1,\dots,y_N$ are realizations
of an adapted $\mathbf{Y}$-valued process $Y_1,\dots,Y_N$
(meaning, as usual, that $Y_n$ is $\FFF_n$-measurable, $n=1,\dots,N$),
that each $P_n$ is computed from $P$ as the conditional probability measure for $Y_n$
given $\FFF_{n-1}$,
and that Sceptic follows a predictable strategy
(where ``predictable'' has the technical meaning that $f_n(x)$ is $\FFF_{n-1}$-measurable for each $x$).
Sceptic's capital $\K_n$ is then an adapted process,
and we have
\begin{equation}\label{eq:P_n}
  P_n(x)
  =
  P
  \left(
    \{Y_n\dots Y_N=x\}
    \mid
    \FFF_{n-1}
  \right)
  \quad \text{a.s.},
  \quad
  x\in\mathbf{Y}^{N-n+1}.
\end{equation}
Now the increment \eqref{eq:joint-K-P-2} in Sceptic's capital is
\begin{align*}
  \K_{n-1}-\K_{n-2}
  &=
  \sum_{x\in\mathbf{Y}^{N-n+1}}
  f_{n-1}(Y_{n-1}x)
  P(\{(Y_n,\dots,Y_N)=x\}\mid\FFF_{n-1})\\
  &\quad-
  \sum_{x\in\mathbf{Y}^{N-n+2}}
  f_{n-1}(x)
  P(\{(Y_{n-1},\dots,Y_N)=x\}\mid\FFF_{n-2})\\
  &=
  \E_P
  \left(
    f_{n-1}(Y_{n-1},\dots,Y_N)\mid\FFF_{n-1}
  \right)
  -
  \E_P
  \left(
    f_{n-1}(Y_{n-1},\dots,Y_N)\mid\FFF_{n-2}
  \right),
\end{align*}
and so we have
\begin{equation}\label{eq:E-Delta-1}
  \E_P(\K_{n-1}-\K_{n-2}\mid\FFF_{n-2})
  =
  0
  \quad
  \text{a.s.}.
\end{equation}
The exceptional (for $n=N$) increment \eqref{eq:joint-K-y-2} is
\begin{equation*}
  \K_n-\K_{n-1}
  =
  f_n(Y_n)
  -
  \E_P(f_n(Y_n)\mid\FFF_{n-1}),
\end{equation*}
which gives the analogue
\begin{equation*}
  \E_P(\K_n-\K_{n-1}\mid\FFF_{n-1})
  =
  0
  \quad
  \text{a.s.}
\end{equation*}
of \eqref{eq:E-Delta-1}.
Therefore, $\K_0,\K_1,\dots,\K_N$
is a measure-theoretic martingale w.r.\ to the filtration $(\FFF_n)$:
$\E_P(\K_n\mid\FFF_{n-1})=\K_{n-1}$, $n=1,\dots,N$.

\subsection{Validity of the measure-theoretic picture}

A \emph{non-terminal situation} in Protocol~\ref{prot:joint-test-final}
(and also in Protocol~\ref{prot:joint})
is a tuple $(P_1,y_1,\dots,P_n)$ for some $n\in\{1,\dots,N\}$,
where $y_i\in\mathbf{Y}$ and $P_i\in\mathfrak{P}(\mathbf{Y}^{N-i+1})$ for all $i$.
Informally, this is a situation in which Sceptic makes a move.
A \emph{terminal situation}
is a tuple $(P_1,y_1,\dots,P_N,y_N)$,
where again $y_i\in\mathbf{Y}$ and $P_i\in\mathfrak{P}(\mathbf{Y}^{N-i+1})$ for all $i$.
Non-terminal situations and terminal situations
are referred to collectively as \emph{situations}.
A strategy for Sceptic can be defined as a function
mapping the non-terminal situations to an allowed move,
namely mapping a situation $(P_1,y_1,\dots,P_n)$
to $f\in\R^{\mathbf{Y}^{N-n+1}}$ in the case of Protocol~\ref{prot:joint-test-final}.
For a fixed strategy for Sceptic
his capital becomes a real-valued function of a situation;
let us refer to such functions as \emph{game-theoretic test martingales}
provided they are nonnegative.

A \emph{game-theoretic process} is a Borel measurable real-valued function of a situation.
A nonnegative game-theoretic process $S$ is a \emph{visible measure-theoretic test martingale}
if, for any finite probability space $(\Omega,P)$ equipped with a filtration $(\FFF_n)_{n=0}^N$
and any adapted sequence of random variables $Y_1,\dots,Y_N$,
\begin{equation}\label{eq:S}
  \begin{aligned}
    S_{n}
    &:=
    S(P_1,Y_1,\dots,P_{n+1}),
    \quad
    n=0,\dots,N-1,\\
    S_N
    &:=
    S(P_1,Y_1,\dots,P_N,Y_N)
  \end{aligned}
\end{equation}
is a test martingale in the usual sense of $S_0=1$ and
\begin{equation}\label{eq:martingale}
  \E_P(S_n\mid\FFF_{n-1})
  =
  S_{n-1},
  \quad
  n=1,\dots,N,
\end{equation}
where the $P_i$ in \eqref{eq:S} are defined by \eqref{eq:P_n},
which becomes
\[
  P_i(x)
  :=
  P
  \left(
    \{Y_i\dots Y_N=x\}
    \mid
    \FFF_{i-1}
  \right),
  \quad
  x\in\mathbf{Y}^{N-i+1},
\]
in our current notation.
The adjective ``visible'' refers to the martingale $(S_n)$
depending only on the players' moves in Protocol~\ref{prot:joint}
(and not depending on the hidden aspects of the realized sample point $\omega\in\Omega$).

The following statement of agreement between the game-theoretic and measure-theoretic pictures
will be proved in Sect.~\ref{subsec:validity}.

\begin{proposition}\label{prop:validity}
  A game-theoretic process is a game-theoretic test martingale
  if and only if it is a visible measure-theoretic test martingale.
\end{proposition}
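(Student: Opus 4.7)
The plan is to handle the two directions separately. The ``only if'' direction is essentially the computation in Section~\ref{subsec:GTP-valid}: given any strategy for Sceptic, the capital $\K_n$ is fully determined by the observable moves $(P_1,Y_1,\dots,P_{n+1})$ (and by $(P_1,Y_1,\dots,P_N,Y_N)$ at $n=N$), so it is a visible game-theoretic process; the increment formulas \eqref{eq:joint-K-P-2} and \eqref{eq:joint-K-y-2}, together with \eqref{eq:P_n}, yield $\E_P(\K_n-\K_{n-1}\mid\FFF_{n-1})=0$ in any measure-theoretic setup, and nonnegativity is automatic.

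For the converse I would construct Sceptic's strategy inductively over the non-terminal situations $s=(P_1,y_1,\dots,P_n)$. The crucial step is to show, for $n<N$ and each $y\in\mathbf{Y}$, that $Q\mapsto S(s,y,Q)$ is affine on $\mathfrak{P}(\mathbf{Y}^{N-n})$. Given $Q_1,Q_2\in\mathfrak{P}(\mathbf{Y}^{N-n})$ and $\lambda\in(0,1)$, set $Q^*:=\lambda Q_1+(1-\lambda)Q_2$ and (thanks to Cromwell's rule) pick $\beta\in(0,1)$ and $Q_0\in\mathfrak{P}(\mathbf{Y}^{N-n})$ with $\beta Q^*+(1-\beta)Q_0=P_n(\cdot\mid y)$. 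Following the construction behind Proposition~\ref{prop:coherence-useless}, I would build two finite probability spaces realizing $s$ with positive probability, each with $P_{n+1}=P_n(\cdot\mid y')$ deterministic for $y'\ne y$ but, conditional on $Y_n=y$, letting $P_{n+1}$ equal $Q^*,Q_0$ with probabilities $\beta,1-\beta$ (setup~1) and $Q_1,Q_2,Q_0$ with probabilities $\beta\lambda,\beta(1-\lambda),1-\beta$ (setup~2); the later forecasts are extended by Bayesian conditioning relative to $P_{n+1}$, and both setups are consistent because in each the conditional mean of $P_{n+1}$ given $Y_n=y$ is $P_n(\cdot\mid y)$. Subtracting the identity $\E_P(S_n\mid\FFF_{n-1})=S(s)$ in setup~1 from the same identity in setup~2 reduces to $\beta\,P_n(y)$ times
\[
  \lambda\,S(s,y,Q_1)+(1-\lambda)\,S(s,y,Q_2)-S(s,y,Q^*)=0,
\]
which is the desired affineness.

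Once affineness is in hand, I would write $S(s,y,Q)=a_y+\sum_{z\in\mathbf{Y}^{N-n}}b_y(z)Q(z)$ and set $f_n(yz):=b_y(z)+c_y$ with $c_y:=a_y-S(s)+\sum_{y',z}b_{y'}(z)P_n(y'z)$; a direct computation shows that the game-theoretic increment $\sum_z f_n(yz)P_{n+1}(z)-\sum_x f_n(x)P_n(x)$ equals $S(s,y,P_{n+1})-S(s)$ for every $y$ and every $P_{n+1}$, with self-consistency following from the purely-Bayesian martingale identity $\sum_yP_n(y)a_y+\sum_{y,z}b_y(z)P_n(yz)=S(s)$. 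For the terminal step $n=N$, the choice $f_N(y):=S(s,y)$ works and the required identity reduces to $\sum_y S(s,y)P_N(y)=S(s)$; nonnegativity of $\K$ is inherited from that of $S$, and $\K_0=S(P_1)=1$ by evaluating the visible-martingale property in the trivial setup. I expect the main obstacle to be the affineness claim---conceptually, the duality between Sceptic's linear betting menu at each step and the linear constraints imposed by the measure-theoretic martingale property---and the associated construction of the two auxiliary setups realizing the prescribed multi-point conditional distributions of $P_{n+1}$.
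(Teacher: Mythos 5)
Your proof is correct, but it takes a genuinely different route from the paper's. The paper first proves the supermartingale version (Theorem~\ref{thm:validity}) by linear programming duality: the primal variables are the conditional probabilities with which each possible next forecast $Q_r$ arises after observing $y$, the (super)martingale property bounds the primal value by $1$, and the dual solution \eqref{eq:dual-2} is read off as Sceptic's portfolio of tickets; Proposition~\ref{prop:validity} is then deduced by a domination/maximality argument (a game-theoretic test supermartingale whose values agree with a visible measure-theoretic test martingale must itself be a martingale, since otherwise one could strictly dominate that measure-theoretic test martingale). You instead exploit the \emph{equality} in \eqref{eq:martingale} directly: by realizing the same situation in two measure-theoretic setups whose conditional laws of $P_{n+1}$ differ only in whether $Q^*$ is split into $Q_1,Q_2$, and subtracting the two martingale identities, you obtain that $Q\mapsto S(s,y,Q)$ is affine, after which Sceptic's move is written down explicitly from the affine coefficients. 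Your argument is self-contained for the martingale case (no detour through supermartingales) and yields the sharper structural fact that visible measure-theoretic test martingales are exactly the processes affine in the announced forecast; on the other hand, it does not extend to Theorem~\ref{thm:validity}, because with ``$\le$'' in \eqref{eq:martingale} the two relations can no longer be subtracted and one genuinely needs the separating-hyperplane/duality argument. The two points you should spell out are the construction of the auxiliary setups---the sample-point-splitting of the proof of Proposition~\ref{prop:coherence-useless} must now produce a \emph{randomized} $P_{n+1}$ conditional on $Y_n=y$, which requires enlarging the sample space and uses Cromwell's rule to keep $Q_0$ a positive measure for small $\beta$---and the routine check that the coefficients $a_y,b_y$ depend (Borel) measurably on the situation, so that the resulting capital process is a game-theoretic process in the sense of Sect.~\ref{sec:embedding}.
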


Proposition~\ref{prop:validity}, however, has a weakness.
Let us say that a game-theoretic process is a \emph{game-theoretic test supermartingale}
if it can be obtained as Sceptic's capital
while he is allowed to discard part of his capital at each step
(but is still not allowed to go into debt).
For example, in the case of Protocol~\ref{prot:joint-test-final}
this corresponds to replacing \eqref{eq:joint-K-P-2} and \eqref{eq:joint-K-y-2}
by Sceptic's moves allowing him to choose $\K_{n-1}$ and $\K_n$,
respectively,
as any nonnegative number not exceeding the corresponding right-hand side.
And a game-theoretic process is a \emph{visible measure-theoretic test supermartingale}
if it is defined in the same way as a visible measure-theoretic test martingale
except that the ``$=$'' in \eqref{eq:martingale} is replaced by ``$\le$''.
The notion of a game-theoretic test supermartingale is obviously redundant,
in the sense of every game-theoretic test supermartingale
being dominated by a game-theoretic test martingale.
But the requirement \eqref{eq:martingale} holding for any finite probability space
might appear restrictive,
and so it is less obvious that measure-theoretic test supermartingales
are redundant in this sense.
Therefore, in Sect.~\ref{subsec:validity} we will start from proving
the following modification of Proposition~\ref{prop:validity}.

\begin{theorem}\label{thm:validity}
  A game-theoretic process is a game-theoretic test supermartingale
  if and only if it is a visible measure-theoretic test supermartingale.
\end{theorem}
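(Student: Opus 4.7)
The plan is to prove the two implications separately. The direction $(\Rightarrow)$ is a direct extension of the computation of Sect.~\ref{subsec:GTP-valid} to allow discarding: if $S=(\K_n)$ arises from some predictable Sceptic strategy $(f_n)$ with possible discarding, then under any finite probability space model in which the $P_n$ are given by \eqref{eq:P_n}, the Sect.~\ref{subsec:GTP-valid} computation (which rests on the tower identity $\E_P[1\{Y_n=y\}P_{n+1}(x)\mid\FFF_{n-1}]=P_n(yx)$) shows that the ``no-discard'' increment $\K_n^{*}-\K_{n-1}$ has zero $\FFF_{n-1}$-conditional expectation. Since $\K_n\le\K_n^{*}$ pointwise, the supermartingale inequality $\E_P[\K_n\mid\FFF_{n-1}]\le\K_{n-1}$ follows; nonnegativity and $\K_0=1$ come from the protocol.

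For the harder direction $(\Leftarrow)$, I would construct a Sceptic strategy by choosing $f_n$ at each non-terminal situation $\sigma=(P_1,y_1,\dots,P_n)$ so that the resulting capital tracks $S$ (with discarding absorbing any slack). For $n<N$ the task is to exhibit $f_n\in\R^{\mathbf{Y}^{N-n+1}}$ satisfying, for every $(y_n,P_{n+1})\in\mathbf{Y}\times\mathfrak{P}(\mathbf{Y}^{N-n})$,
\[
  \sum_{x\in\mathbf{Y}^{N-n}} f_n(y_n x)\,P_{n+1}(x)
  -\sum_{x'\in\mathbf{Y}^{N-n+1}} f_n(x')\,P_n(x')
  \ge S(\sigma,y_n,P_{n+1})-S(\sigma).
\]
Let $\bar{S}_y$ denote the upper concave envelope of $P'\mapsto S(\sigma,y,P')$ on $\mathfrak{P}(\mathbf{Y}^{N-n})$. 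The key lemma to establish is that the visible supermartingale inequality at $\sigma$ is equivalent to $\sum_y P_n(y)\,\bar{S}_y(P_n(\cdot\mid y))\le S(\sigma)$. This follows from the Carath\'eodory identity $\bar{S}_y(P^*)=\sup_\nu\int S(\sigma,y,P')\,\nu(\dd P')$ over finitely-supported probability measures $\nu$ with mean $P^*$, combined with the observation that any finitely-supported probability measure $\tilde\mu$ on $\mathbf{Y}\times\mathfrak{P}(\mathbf{Y}^{N-n})$ with $y$-marginal $P_n$ and conditional mean $P_n(\cdot\mid y)$ is realizable as the conditional law of $(Y_n,P_{n+1})$ given $\FFF_{n-1}$ in some finite probability space.

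Given this equivalence, a suitable $f_n$ is obtained via supporting hyperplanes. Because $P_n$ is positive, each $P_n(\cdot\mid y)$ lies in the relative interior of $\mathfrak{P}(\mathbf{Y}^{N-n})$, and the inequality above forces $\bar{S}_y(P_n(\cdot\mid y))<\infty$, so the concave function $\bar{S}_y$ admits an affine majorant $a_y\ge S(\sigma,y,\cdot)$ with $a_y(P_n(\cdot\mid y))=\bar{S}_y(P_n(\cdot\mid y))$. Setting $f_n(yx):=a_y(\delta_x)$ gives $\sum_x f_n(yx)P'(x)=a_y(P')\ge S(\sigma,y,P')$ for all $P'$, while $\sum_{x'} f_n(x')P_n(x')=\sum_y P_n(y)\,\bar{S}_y(P_n(\cdot\mid y))\le S(\sigma)$; together these yield the required inequality. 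At the terminal step the construction simplifies to $f_N(y):=S(\sigma,y)$, whose correctness is immediate from $\sum_y P_N(y)\,S(\sigma,y)\le S(\sigma)$. A forward induction then shows that with these choices and discarding, $\K_n=S_n$ at each step. The main obstacle is the measurability of the selection $\sigma\mapsto f_n$, required so that Sceptic's strategy produces a Borel game-theoretic process; this will be handled by a standard measurable selection result (e.g., Kuratowski--Ryll-Nardzewski) applied to the closed convex-valued multifunction of supporting affine functions.
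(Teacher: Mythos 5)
Your argument is correct, and it reaches the same destination as the paper by a recognisably dual route. The paper's proof of the hard direction sets up, at each step, a \emph{primal} linear program over the conditional probabilities $X_{y,r}$ of the finitely many possible next forecasts $Q_r$ (constrained to have mean $P(\cdot\mid y)$), observes that the supermartingale property bounds its value by $1$, and then reads Sceptic's ticket portfolio $Z_{y,x}$ off the \emph{dual} program. You skip the primal entirely and construct the dual object directly: your upper concave envelope $\bar S_y$ evaluated at $P_n(\cdot\mid y)$ is exactly the value of the paper's primal (the Carath\'eodory representation of the envelope is the paper's feasible set $\{X_{y,r}\}$, and your realizability observation is what makes the sup over mixtures coincide with the sup over finite probability spaces), while your supporting affine majorant $a_y$ is the paper's dual solution $Z_{y,\cdot}$. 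What your version buys: it quantifies cleanly over \emph{all} possible next forecasts $P_{n+1}$ rather than a finite family $\{Q_r\}$ extracted from one probability space, and it makes explicit the realizability step (any finitely supported conditional law with the right marginal and conditional mean arises from some finite probability space) that the paper leaves implicit; that step does need the sample-point-splitting construction from the proof of Proposition~\ref{prop:coherence-useless}, so you should cite or reproduce it rather than just assert it. What the paper's version buys is that the duality theorem delivers existence of the dual solution and the value identity in one stroke, whereas you have to argue separately that $\bar S_y(P_n(\cdot\mid y))<\infty$ (your argument for this is fine: the terms are nonnegative, $P_n(y)>0$, and a concave function finite at a relative-interior point and bounded below cannot be $+\infty$ anywhere). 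One remark: your closing worry about measurable selection is unnecessary under the paper's definitions --- a strategy for Sceptic is an arbitrary function of the situation, and since with discarding your capital process equals $S$ itself, its Borel measurability is inherited from the hypothesis that $S$ is a game-theoretic process.
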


This theorem implies that every visible measure-theoretic test supermartingale
is dominated by a visible measure-theoretic test martingale.
We will also check this directly in Sect.~\ref{subsec:validity}.

\begin{remark}\label{rem:hidden}
  This section, and Sect.~\ref{subsec:weakness} above,
  illustrate the ``hidden variable'' account of belief change
  (\citealp[Chap.~4, note 14]{Adams:1975},
  \citealp[Theorem 2.1]{Diaconis/Zabell:1982}, \citealp[Sect.~1]{Skyrms:1992}),
  according to which coherent belief update is Bayesian conditioning
  in a bigger belief space.
\end{remark}

\section{Predicting $K$ steps ahead}
\label{sec:K}

For a large time horizon $N$,
the protocols considered in the previous sections are unrealistic
in that Forecaster is asked to produce probability measures
on huge sets such as $\mathbf{Y}^N$.
Starting from this section,
we will assume that all predictions made by Forecaster
are only for the next $K<N$ observations, with $K\ge1$,
and we will sometimes refer to $K$ as the \emph{prediction horizon}.
We are typically interested in the case $K\ll N$.

We can still use the Bayesian prediction protocol (Protocol~\ref{prot:joint}),
but now Sceptic is not allowed to bet more than $K$ steps ahead.
In terms of Protocol~\ref{prot:joint-test-final},
the function $f_n\in\R^{\mathbf{Y}^{N-n+1}}$
depends on its argument $(y_n,\dots,y_N)$
only via its first $K$ elements $y_n,\dots,y_{n+K-1}$
(let us assume for the moment that $n+K-1\le N$).
Writing $f_n(y_n,\dots,y_{n+K-1})$ instead of $f_n(y_n,\dots,y_N)$,
we obtain the following modification of Protocol~\ref{prot:joint-test-final}.

\begin{protocol}\label{prot:joint-test-final-K}\ \\
  \indentI $\K_0 := 1$\\
  \indentI FOR $n=1,\dots,N$:\\
    \indentII Forecaster announces $P_n\in\mathfrak{P}(\mathbf{Y}^{N-n+1})$\\
    \indentII IF $n>1$:\\
      \indentIII $\K_{n-1} := \K_{n-2}
          + \sum_{x\in\mathbf{Y}^{(K-1)\wedge(N-n+1)}} f_{n-1}(y_{n-1}x) P_{n}(x)$\\
          \indentV ${}- \sum_{x\in\mathbf{Y}^{K\wedge(N-n+2)}} f_{n-1}(x) P_{n-1}(x)$\\
    \indentII Sceptic announces $f_n\in\R^{\mathbf{Y}^{K\wedge(N-n+1)}}$\\
    \indentII Reality announces $y_n\in\mathbf{Y}$\\
    \indentII IF $n=N$:\\
      \indentIII $\K_n := \K_{n-1} + f_n(y_n) - \sum_{y\in\mathbf{Y}} f_n(y) P_n(y)$.
\end{protocol}

\noindent
Of course, we obtain an equivalent protocol
if we replace $P_n\in\mathfrak{P}(\mathbf{Y}^{N-n+1})$
by $P_n\in\mathfrak{P}(\mathbf{Y}^{K\wedge(N-n+1)})$
in the third line,
and this replacement would eliminate an irrelevant part of $P_n$.
Alternatively, we obtain an equivalent protocol if we require
$P_n\in\mathfrak{P}(\mathbf{Y}^{K})$.

\begin{remark}
  In the example of weather forecasting one week ahead
  (cf.\ Remark~\ref{rem:K}),
  the predictions $P_n\in\mathfrak{P}(\mathbf{Y}^{K})$
  are quite different from the predictions produced by a typical weather app.
  Weather apps produce marginal probabilities of rain
  whereas the probabilities in $P_n\in\mathfrak{P}(\mathbf{Y}^{K})$
  are joint.
  Testing marginal probabilities would be easier
  than the kind of testing exemplified
  by Protocol~\ref{prot:joint-test-final-K}.
  See \citet{Vovk:Logic-v1} for details of testing marginal probabilities.
\end{remark}

\section{Bayesian decision making}
\label{sec:DM}

Why do we need long-term forecasts?
One reason is that they facilitate nearly optimal decisions.

\subsection{An optimality result for the Bayes decision strategy}

Consider the following decision-making protocol.

\begin{protocol}\label{prot:DM-1}\ \\
  \indentI FOR $n=1,\dots,N$:\\
    \indentII Reality announces $\lambda_n:\mathbf{D}\times\mathbf{Y}^{N-n+1}\to[0,1]$\\
    \indentII Decision Maker announces $d_n\in\mathbf{D}$\\
    \indentII Reality announces the actual observation $y_n\in\mathbf{Y}$.
\end{protocol}

\noindent
At each step $n$ Decision Maker is asked to choose a decision $d_n$
from a finite set $\mathbf{D}$ of permitted decisions.
Before that,
Reality announces a loss function $\lambda_n$ determining Decision Maker's loss
\[
  \lambda_n(d_n,y_n\dots y_N)\in[0,1]
\]
at this step.
In applications the loss functions are usually given in advance,
but we include them in the protocol in order to weaken the conditions of our mathematical result
(Theorem~\ref{thm:optimal} below).
The loss functions are assumed bounded and scaled to the interval $[0,1]$.
The total loss can be computed only after the last step and equals
\begin{equation}\label{eq:Loss}
  \Loss_N
  :=
  \sum_{n=1}^N
  \lambda_n(d_n,y_n\dots y_N)
  \in
  [0,N].
\end{equation}
Of course, $\Loss_N$ is a function of Reality's and Decision Maker's moves,
but we will leave the arguments of $\Loss_N$ implicit.

A strategy for Decision Maker in Protocol~\ref{prot:DM-1}
is a function giving a decision $d_n$ at each step $n$
as function of Reality's previous moves $y_1,\dots,y_{n-1}$ and $\lambda_1,\dots,\lambda_n$.
It would be ideal to have a strategy $A$ for Decision Maker
that is provably either better than any other strategy $B$ or approximately equally good,
but this is clearly impossible;
we need a qualification of the type ``with high probability'',
and our decision making protocol is too poor to express it.

As a first step towards the goal of designing an optimal (in some sense)
strategy for Decision Maker,
we add a new player, Forecaster, to Protocol~\ref{prot:DM-1}.
The following protocol is a combination of Protocols~\ref{prot:DM-1} and~\ref{prot:joint}.

\begin{protocol}\label{prot:DM-2}\ \\
  \indentI FOR $n=1,\dots,N$:\\
    \indentII Reality announces $\lambda_n:\mathbf{D}\times\mathbf{Y}^{N-n+1}\to[0,1]$\\
    \indentII Forecaster announces $P_n\in\mathfrak{P}(\mathbf{Y}^{N-n+1})$\\
    \indentII Decision Maker announces $d_n\in\mathbf{D}$\\
    \indentII Reality announces the actual observation $y_n\in\mathbf{Y}$.
\end{protocol}

Protocol~\ref{prot:DM-2} allows us to design a plausible strategy
(\emph{Bayes strategy}, or \emph{Bayes optimal strategy})
for Decision Maker
(where $d_n$ is now allowed to depend, additionally, on Forecaster's previous moves
$P_1,\dots,P_n$):
\begin{equation}\label{eq:A}
  d_n
  \in
  \arg\min_{d\in\mathbf{D}}
  \sum_{x\in\mathbf{Y}^{N-n+1}}
  \lambda_n(d,x) P_n(x).
\end{equation}
However, we cannot prove anything about this strategy
as we do not know anything about connections between the forecasts $P_n$
and the actual observations $y_n$.
Therefore, we add Sceptic to our protocol,
as in Protocol~\ref{prot:joint-test-final}.

\begin{protocol}\label{prot:DM-3}\ \\
  \indentI $\K_0 := 1$\\
  \indentI FOR $n=1,\dots,N$:\\
    \indentII Reality announces $\lambda_n:\mathbf{D}\times\mathbf{Y}^{N-n+1}\to[0,1]$\\
    \indentII Forecaster announces $P_n\in\mathfrak{P}(\mathbf{Y}^{N-n+1})$\\
    \indentII IF $n>1$:\\
      \indentIII $\K_{n-1} := \K_{n-2}
          + \sum_{x\in\mathbf{Y}^{N-n+1}} f_{n-1}(y_{n-1}x) P_{n}(x)$\\
          \indentV ${}- \sum_{x\in\mathbf{Y}^{N-n+2}} f_{n-1}(x) P_{n-1}(x)$\\
    \indentII Decision Maker announces $d_n\in\mathbf{D}$\\
    \indentII Sceptic announces $f_n\in\R^{\mathbf{Y}^{N-n+1}}$\\
    \indentII Reality announces $y_n\in\mathbf{Y}$\\
    \indentII IF $n=N$:\\
      \indentIII $\K_n := \K_{n-1} + f_n(y_n) - \sum_{y\in\mathbf{Y}} f_n(y) P_n(y)$.
\end{protocol}

In order to prove a law of large numbers for decision making
showing that the Bayes strategy is indeed optimal in some sense,
we need the following combination of Protocols~\ref{prot:DM-3} and~\ref{prot:joint-test-final-K}
that only involves prediction $K$ steps ahead.
(We will see in Sect.~\ref{subsec:essential} that such a law of large numbers
inevitably fails for Protocol~\ref{prot:DM-3}.)

\begin{protocol}\label{prot:DM-4}\ \\
  \indentI $\K_0 := 1$\\
  \indentI FOR $n=1,\dots,N$:\\
    \indentII Reality announces $\lambda_n:\mathbf{D}\times\mathbf{Y}^{K}\to[0,1]$\\
    \indentII Forecaster announces $P_n\in\mathfrak{P}(\mathbf{Y}^{K})$\\
    \indentII IF $n>1$:\\
      \indentIII $\K_{n-1} := \K_{n-2}
          + \sum_{x\in\mathbf{Y}^{(K-1)\wedge(N-n+1)}} f_{n-1}(y_{n-1}x) P_{n}(x)$\\
          \indentV ${}- \sum_{x\in\mathbf{Y}^{K\wedge(N-n+2)}} f_{n-1}(x) P_{n-1}(x)$
          \hfill\refstepcounter{equation}(\theequation)\label{eq:joint-K-P-4}\\
    \indentII Decision Maker announces $d_n\in\mathbf{D}$\\
    \indentII Sceptic announces $f_n\in\R^{\mathbf{Y}^{K\wedge(N-n+1)}}$\\
    \indentII Reality announces $y_n\in\mathbf{Y}$\\
    \indentII IF $n=N$:\\
      \indentIII $\K_n := \K_{n-1} + f_n(y_n) - \sum_{y\in\mathbf{Y}} f_n(y) P_n(y)$.
\end{protocol}

We will continue to use the notation $\Loss_N$ introduced in \eqref{eq:Loss},
which is now modified to
\begin{equation}\label{eq:Loss-mod}
  \Loss_N
  :=
  \sum_{n=1}^{N-K+1}
  \lambda_n(d_n,y_n\dots y_{n+K-1}),
\end{equation}
but we will also be interested in Decision Maker's loss $\Loss_N(A)$
computed by replacing his actual decisions
by the decisions prescribed by a decision strategy~$A$:
\begin{equation*} % \label{eq:Loss}
  \Loss_N(A)
  :=
  \sum_{n=1}^{N-K+1}
  \lambda_n(d^A_n,y_n\dots y_{n+K-1}),
\end{equation*}
where
\[
  d^A_n
  :=
  A(\lambda_1,P_1,y_1,\lambda_2,P_2,\dots,y_{n-1},\lambda_n,P_n),
  \quad
  n=1,\dots,N-K+1;
\]
we are only interested in strategies that are functions of the previous moves
by Reality and Forecaster.
Let us adapt the Bayes strategy \eqref{eq:A} to Protocol~\ref{prot:DM-4}:
\begin{equation}\label{eq:AA}
  d_n
  :=
  d^A_n
  \in
  \arg\min_{d\in\mathbf{D}}
  \sum_{x\in\mathbf{Y}^{K}}
  \lambda_n(d,x) P_n(x),
\end{equation}
with $d^A_n$ chosen as the first element of the $\arg\min$ in a fixed linear order on $\mathbf{D}$
if there are ties among $d$.

If $E$ is a property of Reality's, Forecaster's, and Decision Maker's moves
in Protocol~\ref{prot:DM-4},
we define the \emph{upper game-theoretic probability} of $E$
as the infimum of $\alpha>0$ such that Sceptic has a strategy
that guarantees $\K_n\ge0$ for all $n$
and that ensures $\alpha\K_n\ge1$ whenever $E$ happens.
The following optimality result will be proved
in Appendix~\ref{app:proofs} (Sect.~\ref{subsec:optimal}).

\begin{theorem}\label{thm:optimal}
  Let $\epsilon\in(0,0.3)$.
  There is a strategy $A$ for Decision Maker in Protocol~\ref{prot:DM-4}
  that guarantees
  \begin{equation}\label{eq:optimal}
    \UP
    \left(
      \Loss_N(A) - \Loss_N
      \ge
      2\sqrt{K N \ln\frac{1}{\epsilon}}
    \right)
    \le
    \epsilon.
  \end{equation}
\end{theorem}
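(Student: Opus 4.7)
The plan is to verify \eqref{eq:optimal} for the Bayes strategy $A$ defined in \eqref{eq:AA}. For a generic run of Protocol~\ref{prot:DM-4}, introduce the per-step Bayes regret
\[
  G_n := \lambda_n(d^A_n, y_n\dots y_{n+K-1}) - \lambda_n(d_n, y_n\dots y_{n+K-1})
\]
and its conditional mean under Forecaster's current prediction,
\[
  \mu_n := \sum_{x\in\mathbf{Y}^K} \bigl(\lambda_n(d^A_n,x)-\lambda_n(d_n,x)\bigr) P_n(x),
\]
for $n\le N-K+1$. The defining property of $d^A_n$ gives $\mu_n\le 0$, so $\Loss_N(A)-\Loss_N=\sum_n G_n\le\sum_n (G_n-\mu_n)=:S$. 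Each centred quantity $Z_n := G_n-\mu_n$ has range at most $2$ pointwise and satisfies $\sum_x P_n(x)Z_n(x)=0$, so Hoeffding's lemma bounds its $P_n$-moment generating function by $\phi_n(\eta):=\sum_x P_n(x) e^{\eta Z_n(x)}\le e^{\eta^2/2}$ for every $\eta>0$.

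Split $\{1,\dots,N-K+1\}$ into the $K$ arithmetic progressions $C_r:=\{n\equiv r+1\pmod K\}$, $r=0,\dots,K-1$; consecutive members of each $C_r$ differ by $K$, so the prediction windows $(y_n,\dots,y_{n+K-1})$ for $n\in C_r$ are pairwise disjoint. For a parameter $\eta>0$ to be tuned, split Sceptic's unit initial capital into $K$ equal parts and run on each sub-protocol $r$ the non-negative test martingale that, at each $n\in C_r$, multiplies the sub-capital $V^{(r)}$ by $e^{\eta Z_n}/\phi_n(\eta)$. This multiplicative update is realised inside Protocol~\ref{prot:DM-4} via the rollover construction underlying Sect.~\ref{sec:K}: placing, at step $n$, the additive bet $V^{(r)}(e^{\eta Z_n(\cdot)}/\phi_n(\eta)-1)$ and re-buying the residual bet at each of the steps $n+1,\dots,n+K-1$ makes the successive one-step gains telescope to $V^{(r)}(e^{\eta Z_n}/\phi_n(\eta)-1)$, so the sub-capital ends up equal to $V^{(r)}e^{\eta Z_n}/\phi_n(\eta)$. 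Sceptic's overall move $f_n\in\R^{\mathbf{Y}^K}$ at every step is the superposition of the $K$ rollovers that are simultaneously active.

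Writing $V^{(r)}_N$ for the sub-capital at the end and $S^{(r)}:=\sum_{n\in C_r}Z_n$, the Hoeffding bound $\phi_n(\eta)\le e^{\eta^2/2}$ together with $\sum_r|C_r|=N-K+1\le N$ and the AM--GM inequality give
\[
  \K_N
  =\frac1K\sum_{r=0}^{K-1}V^{(r)}_N
  \ge\Bigl(\prod_r V^{(r)}_N\Bigr)^{1/K}
  \ge\exp\!\Bigl(\frac{\eta S}{K}-\frac{\eta^2 N}{2K}\Bigr).
\]
Choosing $\eta:=\sqrt{2K\ln(1/\epsilon)/N}$ makes the right-hand side $\ge 1/\epsilon$ as soon as $S\ge\sqrt{2KN\ln(1/\epsilon)}$, and therefore whenever $\Loss_N(A)-\Loss_N\ge 2\sqrt{KN\ln(1/\epsilon)}$ (since $2>\sqrt 2$). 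The definition of $\UP$ then yields \eqref{eq:optimal}.

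The conceptual ingredients (Bayes-regret decomposition, disjoint-windows partition into $K$ residue classes, Hoeffding exponential martingale, and an AM--GM mixing step that avoids the usual $\log K$ union-bound penalty) are standard once the picture is set up. The expected obstacle is the bookkeeping of the in-protocol implementation: Protocol~\ref{prot:DM-4} equips Sceptic only with one-step additive bets $f_n\in\R^{\mathbf{Y}^K}$, whereas the desired multiplicative factor $e^{\eta Z_n}/\phi_n(\eta)$ depends on all $K$ future observations. Verifying that the rollover of Sect.~\ref{sec:K} really does telescope into the claimed multiplicative update, and that superposing $K$ simultaneously active rollovers leaves a legal non-negative capital process at every intermediate step, is where the lengthy but routine calculation sits.
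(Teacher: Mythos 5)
Your proof is correct, and it shares the paper's two structural ideas---the Bayes-regret supermartingale observation $\mu_n\le0$ and the partition of $\{1,\dots,N-K+1\}$ into $K$ residue classes with disjoint prediction windows (cf.\ Remark~\ref{rem:strong-law})---but it diverges at the step where the $K$ classes are recombined, and there your route is genuinely different and in fact sharper. The paper first turns each class into a tail bound via the game-theoretic Hoeffding inequality \eqref{eq:Hoeffding-gives} and then aggregates the $K$ tail bounds by robust risk aggregation: Lemma~\ref{lem:EP} (the Embrechts--Puccetti-style bound) combined with Lemma~\ref{lem:representation}, $\UE(X)\le\int_0^\infty\UP(X\ge u)\d u$. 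That aggregation costs the polynomial prefactor $KN/(\gamma(1-\gamma)C^2)$ in \eqref{eq:extreme}, which is precisely what forces the constant $2$ and the restriction $\epsilon<0.3$. You instead keep the $K$ per-class Hoeffding exponential test martingales as capital processes, mix them with equal weights, and pass to the geometric mean by AM--GM; the product telescopes into a single exponential in $S=\sum_n Z_n$ with effective parameter $\eta/K$, and no prefactor appears. The rollover bookkeeping you flag as the likely obstacle is exactly the construction the paper uses to verify that \eqref{eq:L-2} is a game-theoretic martingale---a held position nets $g-\E_n g$ whatever the intermediate forecasts are, since the intermediate sale and repurchase cancel---so that part is unproblematic; superposing the $K$ simultaneously active rollovers is legal because the capital updates are linear in $f_n$, and every intermediate sub-capital is the positive current price of a nonnegative ticket, so $\K_n\ge0$ throughout. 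The net effect is that your argument proves the theorem with $\sqrt2$ in place of $2$ and for all $\epsilon\in(0,1)$, which is the constant the paper itself identifies as ideal in the closing remark of Sect.~\ref{subsec:lower-bound-2}, and it renders Lemmas~\ref{lem:representation} and~\ref{lem:EP} unnecessary.
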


An alternative statement of Theorem~\ref{thm:optimal}
not using the notion of game-theoretic probability
is that there exists a joint strategy for Decision Maker and Sceptic that achieves either
\begin{equation}\label{eq:upper}
  \Loss_N(A) - \Loss_N
  <
  2\sqrt{K N \ln\frac{1}{\epsilon}}
\end{equation}
or $\K_N\ge1/\epsilon$.
For a small $\epsilon$ and large $N$ (as compared with $K\ln\frac{1}{\epsilon}$),
this joint strategy demonstrates that $A$ performs
better than or similarly to the actual moves $d_n$
unless Forecaster is discredited.
This is a version of the law of large numbers that works only when $K\ll N$.

\begin{remark}\label{rem:strong-law}
  Notice that the strong law of large numbers for a fixed $K$ (and with $N\to\infty$, as usual) is trivial:
  we can apply the standard one-step-ahead strong law of large numbers to each $K$th observation
  (starting from observation $1$,
  starting from observation $2$,\dots,
  and finally starting from observation $K$).
  Theorem~\ref{thm:optimal} is less trivial,
  but interestingly, it is based on the same idea.
  The argument used in the arXiv version~1 of this paper is different
  but leads to a weaker result (Theorem~7.5 in that version).
  See Remark~\ref{rem:inefficient} for further details.
\end{remark}

The strategy $A$ in the statement of Theorem~\ref{thm:optimal}
can be chosen as the Bayes optimal strategy~\eqref{eq:AA}.
Theorem~\ref{thm:optimal} shows that, for any other strategy $B$ for Sceptic,
we have
\begin{equation}\label{eq:optimal-1}
  \UP
  \left(
    \Loss_N(A) - \Loss_N(B)
    \ge
    2\sqrt{K N \ln\frac{1}{\epsilon}}
  \right)
  \le
  \epsilon;
\end{equation}
we, however, prefer the stronger statement \eqref{eq:optimal}
allowing Forecaster to choose his moves on the fly.
We can rewrite \eqref{eq:optimal-1} as
\begin{equation*}
  \UP
  \left(
    \frac1N \Loss_N(A)
    -
    \frac1N \Loss_N(B)
    \ge
    \delta
  \right)
  \le
  \exp
  \left(
    -\frac{\delta^2 N}{4K}
  \right)
\end{equation*}
for any $\delta\ge2.2\sqrt{K/N}$.
The restriction $\delta\ge2.2\sqrt{K/N}$
is coming from the condition $\epsilon<0.3$ in Theorem~\ref{thm:optimal};
without this restriction, we can still claim that
\begin{equation}\label{eq:no-restriction}
  \UP
  \left(
    \frac1N \Loss_N(A)
    -
    \frac1N \Loss_N(B)
    \ge
    \delta
  \right)
  \le
  5
  \frac{K}{\delta^2 N}
  \exp
  \left(
    -\frac{\delta^2 N}{4K}
  \right)
\end{equation}
(for a proof, see the end of Sect.~\ref{subsec:optimal}).

\begin{remark}
  In Theorem~\ref{thm:optimal} we compare Decision Maker's actual loss $\Loss_N$
  with the loss she would have suffered following the strategy $A$
  defined by \eqref{eq:AA}.
  Our interpretation of this theorem depends on the assumption
  that Reality's and Forecaster's moves are not affected by Decision Maker's moves.
\end{remark}

\subsection{Predicting $K<N$ steps ahead is essential for our statement of optimality}
\label{subsec:essential}

Theorem~\ref{thm:optimal} is about predicting $K$ steps ahead.
How important is this restriction?
Let us check that it may not be true that
\begin{equation}\label{eq:crude}
  \frac1N(\Loss_N(A) - \Loss_N)
  <
  \delta
\end{equation}
with high probability in Protocol~\ref{prot:DM-3} for $\delta\ll 1$
if we use the definition of the cumulative loss given in \eqref{eq:Loss}
(there is little difference between \eqref{eq:Loss} and \eqref{eq:Loss-mod}
for $K\ll N$, but for $K=N$ the latter leads
to vacuous statements for $\Loss_N(A) - \Loss_N$);
as before, $A$ stands for the Bayes optimal strategy.
The intuition behind this demonstration is that at each step
Decision Maker is asked to predict the last observation $y_N$,
and this creates heavy dependence between losses at different steps
that ruins the law of large numbers.

Set $\mathbf{D}:=\mathbf{Y}:=\{0,1\}$,
and suppose (in the spirit of measure-theoretic probability)
that all players know and comply with a probability measure
$P\in\mathfrak{P}(\{0,1\}^N)$
governing Reality.
The loss functions output by Reality are
\begin{equation}\label{eq:binary-loss}
  \lambda_n(d_n,y_n\dots y_N)
  :=
  \begin{cases}
    0 & \text{if $d_n=y_N$}\\
    1 & \text{otherwise},
  \end{cases}
\end{equation}
and the true probability measure $P$ is such that
$P(\{y_N=1\})=0.4$
(so that $y_N=0$ is slightly likelier than $y_N=1$).

The Bayes optimal strategy $A$ given by \eqref{eq:A} is $d_n^A:=0$.
Let us compare it with the complementary strategy $B:=1-A$
(or simply $B:=1$).
We have
\begin{equation}\label{eq:counter-example}
  \frac1N(\Loss_N(A)-\Loss_N(B))
  =
  \begin{cases}
    1 & \text{with probability $0.4$}\\
    -1 & \text{with probability $0.6$},
  \end{cases}
\end{equation}
and so the inequality~\eqref{eq:crude} is grossly violated
with a significant probability.

Applying the idea leading to \eqref{eq:counter-example}
on a smaller scale
(to each $K$th step instead of the last step),
we obtain the following lower bound for Protocol~\ref{prot:DM-4}.

\begin{proposition}\label{prop:lower-bound-1}
  For all $N$ and $K<N/5$,
  \begin{equation}\label{eq:lower-bound-1}
    \UP
    \left(
      \Loss_N(A) - \Loss_N
      \ge
      \sqrt{K N}
    \right)
    \ge
    \epsilon,
  \end{equation}
  where $A$ is the Bayes optimal strategy
  and $\epsilon$ is a universal positive constant.
\end{proposition}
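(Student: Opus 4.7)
My plan is to establish the lower bound by combining a concrete measure-theoretic construction with the standard reduction from the game-theoretic to the measure-theoretic upper probability. For any joint distribution $P$ on all players' moves under which Forecaster's announced $P_n$ match the correct conditional distributions given the past, the computation in Sect.~\ref{subsec:GTP-valid} carries over to Protocol~\ref{prot:DM-4} to show that, under every Sceptic strategy, the capital $(\K_n)$ is a nonnegative $P$-martingale with $\K_0 = 1$. Doob's maximal inequality then yields $P(\max_n \K_n \geq 1/\alpha) \leq \alpha$, so no Sceptic strategy can witness $\UP(E) < P(E)$; it therefore suffices to construct a $P$ achieving $P(E) \geq \epsilon$ for $E := \{\Loss_N(A) - \Loss_N \geq \sqrt{KN}\}$.

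I will scale up the single-block counter-example of~\eqref{eq:counter-example} as follows. Take $\mathbf{D} = \mathbf{Y} = \{0,1\}$, let the $y_n$ be iid uniform on $\{0,1\}$ under $P$, and have Forecaster output the uniform product measure on $\{0,1\}^K$ at every step (which is the correct conditional distribution). Put $J := \lfloor (N - K + 1)/K \rfloor$; the hypothesis $K < N/5$ forces $J \geq 4$. Partition $\{1, \ldots, JK\}$ into $J$ contiguous blocks of length $K$, and at step $n = (j-1)K + i$ in block $j$ have Reality play
\[
  \lambda_n(d, (x_1, \ldots, x_K)) := 1_{\{d \neq x_{K-i+1}\}},
\]
engineered so that its realized value equals $1_{\{d_n \neq y_{jK}\}}$ and depends only on the block's final observation $y_{jK}$; on the remaining indices set $\lambda_n \equiv 0$. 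Decision Maker plays the constant $d_n \equiv 1$. Because the uniform forecast makes both decisions tie in expected loss at every step, the Bayes strategy~\eqref{eq:AA} selects $d_n^A \equiv 0$, and block-wise summation yields
\[
  \Loss_N(A) - \Loss_N = K \sum_{j=1}^{J} W_j, \qquad W_j := 2 y_{jK} - 1,
\]
with the $W_j$ iid symmetric $\pm 1$.

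From $JK \leq N - K + 1$ one gets $N \leq (J+2)K - 2$, hence $\sqrt{KN} \leq K\sqrt{J+2}$, so $E$ contains the event $\{\sum_{j=1}^J W_j \geq \sqrt{J+2}\}$, whose probability I will lower-bound uniformly in $J \geq 4$. The worst case $J = 4$ yields $P(\sum_{j=1}^4 W_j = 4) = 1/16$; for every larger $J$ the binomial tail is bounded below by at least this much, either by direct enumeration at moderate $J$ or, for large $J$, by the central-limit approximation to $1 - \Phi(1) \approx 0.16$. Taking $\epsilon := 1/16$ then finishes the argument. The main obstacle is this uniformity check: since $J$ can be as small as $4$, an asymptotic CLT is insufficient, and one must explicitly handle a small handful of lattice-parity corner cases of the binomial tail to pin down the universal constant.
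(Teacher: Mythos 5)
Your proof is correct and follows essentially the same route as the paper's: the same block construction with i.i.d.\ fair-coin observations, uniform forecasts, a loss depending only on each block's final observation, and actual decisions $d_n\equiv1$ against the Bayes choice $d_n^A\equiv0$, followed by the same reduction from $\UP$ to $P$ via Ville's (Doob's maximal) inequality and a binomial-tail lower bound. You are in fact somewhat more explicit than the paper about the block bookkeeping and the lattice-parity check needed to pin down the universal constant, which the paper dispatches with a one-line appeal to the central limit theorem.
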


The lower bound $\sqrt{K N}$ in \eqref{eq:lower-bound-1}
matches the upper bound in \eqref{eq:optimal}
(Theorem~\ref{thm:optimal})
as far as $K$ and $N$ are concerned.
(The result in \eqref{eq:optimal} is best interpreted as an upper bound,
despite the inequality ``$\ge$'';
this can be seen from its restatement in the form~\eqref{eq:upper}.)

See Appendix~\ref{app:martingale-SLLN}
for related results
(Propositions~\ref{prop:LLN} and~\ref{prop:anti-LLN})
in measure-theoretic probability.

Proposition~\ref{prop:lower-bound-1} only concerns the optimality
of the upper bound in \eqref{eq:optimal} in $K$ and $N$,
but the next proposition shows that it is also close to being optimal in~$\epsilon$.
In this proposition we use a slightly different definition of $\Loss_N$:
now, unlike in \eqref{eq:Loss-mod},
we sum the losses of all decisions, including those of $d_{N-K+2},\dots,d_N$
(they will be defined in a very natural way).

\begin{proposition}\label{prop:lower-bound-2}
  Suppose that $N$ and $K$ are such that $\sqrt{N/K}$ is an even integer.
  Then the Bayes optimal strategy $A$ satisfies,
  for any $\epsilon>0$ such that $\sqrt{\ln\frac{1}{\epsilon}}$ is integer,
  \begin{equation}\label{eq:lower-bound-2}
    \UP
    \left(
      \Loss_N(A) - \Loss_N
      \ge
      \sqrt{K N \ln\frac{1}{\epsilon}}
    \right)
    \ge
    \epsilon^4/15
  \end{equation}
  provided
  \begin{equation}\label{eq:condition}
    \sqrt{K N \ln\frac{1}{\epsilon}}
    \le
    N/4.
  \end{equation}
\end{proposition}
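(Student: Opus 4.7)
The plan is to construct an explicit joint distribution $P$ on plays of Protocol~\ref{prot:DM-4} under which the loss gap $\Loss_N(A)-\Loss_N$ is a symmetric binomial random walk, and then combine binomial anti-concentration with Markov's inequality applied to Sceptic's capital. Take $\mathbf{Y}=\mathbf{D}=\{0,1\}$, with $0$ preceding $1$ in the linear order used for tie-breaking in~\eqref{eq:AA}. Write $m:=\sqrt{N/K}$ (an even integer by hypothesis), $M:=m^2=N/K$, partition the $N$ steps into $M$ blocks of $K$ consecutive steps, and let $j(n):=\lceil n/K\rceil$. Under $P$ let $y_1,\dots,y_N$ be i.i.d.\ Bernoulli$(1/2)$, so the honest forecast $P_n$ is uniform on $\{0,1\}^K$ (in particular strictly positive). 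At step $n$ Reality announces $\lambda_n(d,z_1,\dots,z_K):=\mathbf{1}\{d\ne z_{j(n)K-n+1}\}$, which always reads exactly the coordinate that holds $y_{j(n)K}$ (coordinates past step $N$ are never inspected and can be padded arbitrarily). Decision Maker plays $d_n:=1$ at every step.

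Because $P_n$ gives mass $1/2$ each to $\{y_{j(n)K}=0\}$ and $\{y_{j(n)K}=1\}$, rule~\eqref{eq:AA} together with the tie-breaking convention forces $d_n^A=0$ at every step. Writing $b_j:=y_{jK}$ and $B:=\sum_{j=1}^M b_j\sim\mathrm{Bin}(M,1/2)$, the per-block losses are $K\cdot\mathbf{1}\{b_j=1\}$ under $A$ and $K\cdot\mathbf{1}\{b_j=0\}$ under the actual play, so $\Loss_N(A)-\Loss_N=K(2B-M)$. Since $\sqrt{KNL}=K\sqrt{ML}$ with $L:=\ln(1/\epsilon)$, and $\sqrt{ML}/2=m\sqrt{L}/2$ is a non-negative integer because $m$ is even and $\sqrt{L}$ is an integer, the event in~\eqref{eq:lower-bound-2} is exactly $E:=\{B\ge M/2+\sqrt{ML}/2\}$.

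The technical core is the bound $P(E)\ge\epsilon^4/15$. I would first show $P(E)\ge\Phi^c(\sqrt{L})$, where $\Phi^c$ is the standard Gaussian upper tail, via Slud's inequality for the symmetric binomial, or equivalently by writing the point masses $\binom{M}{M/2+k}2^{-M}$ through Stirling and summing them against a Gaussian density over a window of $\lceil\sqrt{M/L}\rceil$ consecutive integers to the right of $k=\sqrt{ML}/2$; condition~\eqref{eq:condition} translates to $L\le M/16$, which keeps $k$ safely below $M/4$ and makes the higher-order terms in $(1+r)\ln(1+r)+(1-r)\ln(1-r)=r^2+r^4/6+O(r^6)$ (with $r=2k/M=\sqrt{L/M}$) amount to at most a bounded multiplicative correction. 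Mills' ratio then gives $\Phi^c(\sqrt{L})\ge\frac{\sqrt{L}}{\sqrt{2\pi}(L+1)}e^{-L/2}\ge\frac{1}{2\sqrt{2\pi L}}e^{-L/2}$ for $L\ge 1$, and comparing with $\epsilon^4/15=e^{-4L}/15$ reduces to the inequality $e^{7L/2}\ge 2\sqrt{2\pi L}/15$, which holds with large slack at $L=1$ (left side above $33$, right side below $1$) and tightens as $L$ grows. The anti-concentration step is the main technical obstacle, because the commonly cited form of Slud's inequality excludes $p=1/2$; the clean workaround is to spell out the Stirling expansion explicitly, and the huge slack between the genuine tail (of order $\sqrt{\epsilon}/\sqrt{L}$) and the target $\epsilon^4/15$ easily absorbs all constants.

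To conclude, observe that because $P_n$ is the true conditional distribution of $(y_n,\dots,y_{n+K-1})$ given the past, the argument of Sect.~\ref{subsec:GTP-valid} applied to the $K$-steps-ahead capital updates~\eqref{eq:joint-K-P-4} makes Sceptic's capital $(\K_n)$ a nonnegative $P$-martingale with $\K_0=1$, hence $P(\K_N\ge 1/\alpha)\le\alpha$ by Markov. Any Sceptic strategy purporting to witness $\UP(E)\le\alpha$ would need $\K_N\ge 1/\alpha$ throughout $E$, yet $P(E\cap\{\K_N<1/\alpha\})\ge P(E)-\alpha\ge\epsilon^4/15-\alpha$ is strictly positive whenever $\alpha<\epsilon^4/15$; this contradiction proves $\UP(E)\ge\epsilon^4/15$, which is~\eqref{eq:lower-bound-2}.
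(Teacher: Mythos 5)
Your construction is essentially identical to the paper's: the same fair-coin, block-structured counterexample with the loss reading $y_{\lceil n/K\rceil K}$, the same tie-breaking argument forcing $A$ to play $0$ against the actual play of $1$, the same reduction of the event to a $\mathrm{Bin}(N/K,1/2)$ upper-tail bound with condition~\eqref{eq:condition} ensuring the deviation stays below $M/8$, and the same passage from $P$ to $\UP$ via Ville's inequality for the nonnegative capital martingale. The only divergence is that the paper simply cites the binomial anti-concentration bound $\P(X\ge n/2+t)\ge\frac{1}{15}e^{-16t^2/n}$ from Matou\v{s}ek and Vondr\'ak (whose hypotheses, $t$ integer and $t\le n/8$, are exactly the integrality and \eqref{eq:condition} assumptions), whereas you propose to rederive a Gaussian-tail lower bound by Stirling; that is workable given the large slack you correctly identify, but it is the same proof with one lemma reproved rather than quoted.
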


The condition~\eqref{eq:condition} is mild in this context;
without it, the bound \eqref{eq:lower-bound-2} appears useless.
The substitution $\epsilon:=\epsilon^4/15$ in Proposition~\ref{prop:lower-bound-2}
gives the following corollary,
which shows that the upper bound in \eqref{eq:optimal} is optimal
if we ignore additive and multiplicative constants in the ``regret term''
\[
  2 \sqrt{K N \ln\frac{1}{\epsilon}}.
\]

\begin{corollary}
  Under the conditions of Proposition~\ref{prop:lower-bound-2},
  \begin{equation}\label{eq:corollary}
    \UP
    \left(
      \Loss_N(A) - \Loss_N
      \ge
      \frac12
      \sqrt{K N \ln\frac{1}{15\epsilon}}
    \right)
    \ge
    \epsilon
  \end{equation}
  provided the term $\sqrt{\strut\dots}$ does not exceed $N/2$.
\end{corollary}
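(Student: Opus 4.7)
The plan is to apply Proposition~\ref{prop:lower-bound-2} directly, using the substitution indicated in the paragraph preceding the corollary. To avoid clashing the two roles of $\epsilon$, I would first rename the probability parameter of the proposition to $\eta$, so that it reads: for any $\eta>0$ such that $\sqrt{\ln(1/\eta)}$ is an integer and $\sqrt{KN\ln(1/\eta)}\le N/4$,
\[
  \UP\!\left(\Loss_N(A)-\Loss_N \ge \sqrt{KN\ln\tfrac{1}{\eta}}\right) \ge \eta^{4}/15,
\]
with the standing assumption that $\sqrt{N/K}$ is an even integer.

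Next I would choose $\eta$ so that $\eta^{4}/15=\epsilon$, i.e.\ $\eta:=(15\epsilon)^{1/4}$. A short calculation gives
\[
  \ln\tfrac{1}{\eta} = \tfrac14\ln\tfrac{1}{15\epsilon},
  \qquad
  \sqrt{KN\ln\tfrac{1}{\eta}} = \tfrac12\sqrt{KN\ln\tfrac{1}{15\epsilon}},
\]
so substituting into the displayed inequality above produces exactly \eqref{eq:corollary}, with the lower bound $\eta^{4}/15$ on the probability equal to $\epsilon$. The constraint $\sqrt{KN\ln(1/\eta)}\le N/4$ becomes $\tfrac12\sqrt{KN\ln(1/(15\epsilon))}\le N/4$, which is precisely the corollary's ``provided'' clause $\sqrt{KN\ln(1/(15\epsilon))}\le N/2$. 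The integrality hypothesis $\sqrt{\ln(1/\eta)}\in\{0,1,2,\dots\}$ translates under the substitution to $\tfrac12\sqrt{\ln(1/(15\epsilon))}$ being an integer; this is what the phrase ``under the conditions of Proposition~\ref{prop:lower-bound-2}'' is to be read as for the new~$\epsilon$.

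There is no real obstacle: the whole proof is a one-line change of variable. The only things to watch are keeping the two occurrences of $\epsilon$ distinct during the substitution and verifying that each of the three constraints (on $\sqrt{N/K}$, on $\sqrt{\ln(1/\eta)}$, and on $\sqrt{KN\ln(1/\eta)}$) transforms cleanly into a corresponding constraint on the new~$\epsilon$, as carried out above.
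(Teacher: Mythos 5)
Your proof is correct and is essentially the paper's own argument: the paper derives the corollary purely by the substitution $\epsilon:=\epsilon^4/15$ in Proposition~\ref{prop:lower-bound-2} (stated in the sentence immediately preceding the corollary), which is exactly your change of variable $\eta=(15\epsilon)^{1/4}$. Your careful tracking of how the integrality and $N/4$ constraints transform just makes explicit what the paper leaves implicit.
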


For proofs of Propositions~\ref{prop:lower-bound-1} and~\ref{prop:lower-bound-2},
see Sects~\ref{subsec:lower-bound-1} and~\ref{subsec:lower-bound-2},
respectively.

\section{Conclusion}
\label{sec:conclusion}

This paper has scratched the surface of the diachronic picture of realistic Bayesian forecasting
not based on Bayesian conditioning.
We discussed ways of testing such forecasts based on betting
and their applications to Bayesian decision making.

Obvious directions of further research include, e.g.,
considering an infinite time horizon and more general observation spaces $\mathbf{Y}$.
Another direction is to generalize our basic forecasting protocol:
instead of assuming that the forecaster observes a new outcome $y_n$ at each step,
we could consider cases where beliefs are revised
(perhaps because new information arrives from outside the protocol)
without new outcomes becoming known;
for a first step in this direction,
see Appendix~\ref{app:radical}.

\subsection*{Acknowledgments}

My research has been partially supported by Mitie.
Many thanks to Philip Dawid for advice on literature and useful discussions
and to Ilia Nouretdinov for his input.
Comments by the participants in the International Seminar on Selective Inference
are gratefully appreciated.

\appendix
\section{Proofs}
\label{app:proofs}

\subsection{Proof of Proposition~\ref{prop:coherence-useless}}
\label{subsec:coherence-useless}

Let us fix such sequences of probability measures
$P_n\in\mathfrak{P}(\mathbf{Y}^{N-n+1})$
and outcomes $y_n\in\mathbf{Y}$.
As a first step,
define $\Omega$ as $\mathbf{Y}^N$,
$P$ as $P_1$,
$Y_n(\omega)$ as the $n$th element $\omega_n$ of $\omega\in\Omega$,
and let the $\sigma$-algebra $\FFF_n$ be generated by $Y_1,\dots,Y_n$.

Next modify the finite probability space $(\Omega,P)$ and filtration $(\FFF_n)$
as follows.
Split each sample point $y_1\omega_2\dots\omega_N$
that starts from $y_1$ into two sample points,
$y'_1\omega_2\dots\omega_N$ and $y''_1\omega_2\dots\omega_N$,
and make the sets $\{y'_1\}\times\Omega^{N-1}$ and $\{y''_1\}\times\Omega^{N-1}$
$\FFF_n$-measurable for $n\ge 1$.
Split the old value $c:=P_1(\{y_1\}\times\Omega^{N-1})$
into $P(\{y'_1\}\times\Omega^{N-1}):=\epsilon c$, for a sufficiently small $\epsilon>0$,
and $P(\{y''_1\}\times\Omega^{N-1}):=(1-\epsilon)c$.
Without changing $P(\{\omega_1\dots\omega_N\})$ for $\omega_1\notin\{y'_1,y''_1\}$,
set
\[
  \frac{P(\{y'_1\omega_2\dots\omega_N\})}{P(\{y'_1\}\times\Omega^{N-1})}
  :=
  P_2(\{\omega_2\dots\omega_N\}),
  \quad
  \omega_2,\dots,\omega_N\in\Omega,
\]
and define 
\[
  \frac{P(\{y''_1\omega_2\dots\omega_N\})}{P(\{y'_1\}\times\Omega^{N-1})},
  \quad
  \omega_2,\dots,\omega_N\in\Omega,
\]
in such a way that we have an agreement with $P_1$:
\begin{equation*}
  \forall\omega_2,\dots,\omega_N\in\Omega:
  \frac{P(\{y'_1\omega_2\dots\omega_N,y''_1\omega_2\dots\omega_N\})}{P(\{y'_1,y''_1\}\times\Omega^{N-1})}
  =
  \frac{P_1(\{y_1\omega_2\dots\omega_N\})}{P_1(\{y_1\}\times\Omega^{N-1})};
\end{equation*}
this is possible for a sufficiently small $\epsilon>0$.

Apply the same procedure to the probability subspace of $(\Omega,P)$
consisting of the sample points $y'_1\omega_2\dots\omega_N$,
thereby splitting $y_2$ into $y'_2$ and $y''_2$.
Continue by splitting $y_3$, $y_4$, etc.

\subsection{Proof of Theorem~\ref{thm:validity} and Proposition~\ref{prop:validity}}
\label{subsec:validity}

We start from Theorem~\ref{thm:validity}.
Let us consider in detail only the first step in Protocol~\ref{prot:joint-test-final},
when we move from prediction $P:=P_1$ to prediction $Q:=P_2$ (the rest will be easy).
We regard $P$ as fixed (so that our argument is conditional on $P$)
and use the notation $P(y)$, where $y\in\mathbf{Y}$,
and $P(x\mid y)$, where $y\in\mathbf{Y}$ and $x\in\mathbf{Y}^{N-1}$,
as usual.
We also use $Q_r(x\mid y)$ for the various $Q=P_2$ possible
after observing $y$ as the first observation $y_1$
(and we assume that $Q_r$ are all different).
We will be able to apply the standard duality theorem
since $r$ ranges over a finite set;
remember that we consider a finite probability space.
It will be convenient to refer to $S_1$ as the \emph{first value}
of a game-theoretic or measure-theoretic martingale $(S_n)_{n=0}^N$.

In Sect.~\ref{subsec:GTP-valid} we saw that every game-theoretic test martingale
is a visible measure-theoretic test martingale,
and this implies that every game-theoretic test supermartingale
is a visible measure-theoretic test supermartingale;
therefore, we will be only interested in the opposite direction.
Let $S_{y,r}$ be the first value of a visible measure-theoretic test supermartingale $(S_n)_{n=0}^N$;
i.e., $S_{y,r}$ is the first value $S_1$ when we observe $y_1=y$ and $P_2=Q_r$.
Our goal is to show that $S_{y,r}$ is the first value of a game-theoretic test supermartingale.

The primary (measure-theoretic) linear programming problem
involves variables $X_{y,r}\ge0$ subject to the constraints
\begin{equation}
  \sum_r X_{y,r} = 1
  \notag
\end{equation}
for all $y$ and
\begin{equation}\label{eq:C-i-2}
  \sum_r X_{y,r} Q_r(x\mid y) = P(x\mid y)
\end{equation}
for all $y$ and $x$.
The interpretation is that $X_{y,r}$ is the conditional probability of $Q_r$
after observing $y$.
The relevant optimization problem is
\begin{equation}\label{eq:primary-objective}
  \sum_y
  P(y)
  \sum_r
  S_{y,r} X_{y,r}
  \to
  \max.
\end{equation}
By the choice of $S$, the max value is at most $1$.

The dual (game-theoretic) problem is
\begin{equation}\label{eq:dual-objective}
  \sum_{y,x}
  P(x\mid y)
  Y_{y,x}
  +
  \sum_{y}
  Y_{y}
  \to
  \min
\end{equation}
subject to
\begin{equation}\label{eq:Q-j}
  \sum_x
  Q_r(x\mid y)
  Y_{y,x}
  +
  Y_{y}
  \ge
  P(y)
  S_{y,r}
\end{equation}
for all $y,r$.
(The recipe for stating the dual problem
given in \citealt[Sect.~6.2]{Matousek/Gartner:2007},
is particularly convenient in this context.)
The dual variables $Y_{y,x}$ and $Y_y$ are unconstrained.
Rewriting \eqref{eq:dual-objective} and \eqref{eq:Q-j} as
\begin{align*}
  \sum_{y,x}
  P(x\mid y)
  (Y_{y,x}+Y_{y})
  &\to
  \min\\
  \sum_x
  Q_r(x\mid y)
  (Y_{y,x}+Y_{y})
  &\ge
  P(y)
  S_{y,r},
\end{align*}
respectively,
we can see that the optimization problem \eqref{eq:dual-objective}--\eqref{eq:Q-j}
is equivalent to
\begin{equation}\label{eq:dual-1}
  \sum_{y,x}
  P(x\mid y)
  Y_{y,x}
  \to
  \min
  \quad
  \text{subject to}
  \quad
  \sum_x
  Q_r(x\mid y)
  Y_{y,x}
  \ge
  P(y)
  S_{y,r}.
\end{equation}
Replacing the variables $Y_{y,x}$ with new variables $Z_{y,x}$
defined by $Y_{y,x}=P(y)Z_{y,x}$,
we rewrite the optimization problem \eqref{eq:dual-1} as
\begin{equation}\label{eq:dual-2}
  \sum_{y,x}
  P(y x)
  Z_{y,x}
  \to
  \min
  \quad
  \text{subject to}
  \quad
  \sum_x
  Q_r(x\mid y)
  Z_{y,x}
  \ge
  S_{y,r},
\end{equation}
with the same value, at most 1.
Any solution to the optimization problem \eqref{eq:dual-2} achieves our goal:
setting $f_1(y x):=Z_{y,x}$,
our portfolio of tickets will have the total final price at least $S_1$
while their total initial price will be at most 1.
To complete the proof of Theorem~\ref{thm:validity},
we need to apply the same argument conditionally on the first $n$ observations
$y_1,\dots,y_n$
for $n=1,\dots,N-1$.

Before proving Proposition~\ref{prop:validity}
let us make a short detour and check that
every visible measure-theoretic test supermartingale
is dominated by a visible measure-theoretic test martingale.
First we make $S=(S_{y,r})$ admissible
replacing each $S_{y,r}$ by the left-hand side of the constraint
in \eqref{eq:dual-2}.
The expression being maximized in \eqref{eq:primary-objective} becomes
\begin{align*}
  \sum_y
  P(y)
  \sum_r
  X_{y,r} S_{y,r}
  &=
  \sum_y
  P(y)
  \sum_r
  X_{y,r}
  \sum_x
  Q_r(x\mid y)
  Z_{y,x}\\
  &=
  \sum_y
  P(y)
  \sum_x
  Z_{y,x}
  P(x\mid y)
  =
  \sum_{y,x}
  P(y x)
  Z_{y,x},
\end{align*}
where the second equality uses \eqref{eq:C-i-2}.
The last expression is very natural,
and does not depend at all on the primary variables $X_{y,r}$,
which shows that $S_{y,r}$ is the first value of a visible measure-theoretic test martingale
except that its initial value can be below 1
(in which case it can be scaled up to make
its initial value equal to 1).

Finally, if $(S_{y,r})$ is the first value of a visible measure-theoretic test martingale,
it will coincide with the first value of a game-theoretic test supermartingale,
which will be the first value of a game-theoretic test martingale
(otherwise we could increase this game-theoretic test supermartingale
to obtain a visible measure-theoretic martingale
whose first value would strictly dominate the first value
of the original visible measure-theoretic test martingale,
which is impossible).
Repeating this argument for $y_2,\dots,y_N$ completes the proof of Proposition~\ref{prop:validity}.

\subsection{Game-theoretic probability}
\label{subsec:GTP}

In the proof of Theorem~\ref{thm:optimal} in Sect.~\ref{subsec:optimal}
we will need some basic definitions and results
in game-theoretic probability given in this subsection;
see \citet{Shafer/Vovk:2019} for further information.
We will let $\E_n$ denote the game-theoretic expectation
(to be defined momentarily)
at the point in Protocol~\ref{prot:DM-4}
right after Decision Maker announcing her move $d_n$
(let us call this point the \emph{checkpoint}).
In our current context $\E_n$ can be defined as follows.
If $f=f(y_n\dots y_{(n+K-1)\wedge N})$ is a function
of the $K$ consecutive moves by Reality starting from $y_n$
(and ending with $y_N$ if $n+K-1\ge N$),
\[
  \E_n f
  :=
  \sum_{x\in\mathbf{Y}^{K\wedge(N-n+1)}}
  f(x) P_n(x).
\]
More generally, if $f$ depends on other future moves (by Reality and other players),
$\E_n f$ is the initial capital (if it exists) starting from which Sceptic can attain
exactly the final capital of $f$ at the end of step $N$.
If $f$ also depends on the moves preceding the step $n$ checkpoint,
$\E_n f$ is found separately for each set of these preceding moves.

The \emph{game-theoretic sample space} $\Omega$
consists of all possible sequences of moves
\[
  \omega
  :=
  \left(
    \lambda_1,P_1,d_1,y_1,
    \dots
    \lambda_N,P_N,d_N,y_N
  \right)
\]
by non-Sceptic players in Protocol~\ref{prot:DM-4}.
A \emph{nonnegative variable} $X$ is a function
$X:\Omega\to[0,\infty)$.
The \emph{upper expectation} of $X$ is defined as 
\begin{equation*}
  \UE(X)
  :=
  \inf
  \left\{
    \alpha>0
    \mid
    \exists\text{ strategy for Sceptic }
    \forall\omega\in\Omega:
    \alpha\K_N(\omega)\ge X(\omega)
  \right\},
\end{equation*}
where $\omega$ are the non-Sceptic player's moves
and $\K_N$ is regarded as function of $\omega$.
In words, $\UE(X)$ is the smallest (in the sense of $\inf$) initial capital
that Sceptic can turn into $X(\omega)$ or more.
An \emph{event} is a set $E\subseteq\Omega$.
The \emph{upper probability} $\UP(E)$ of an event $E$
is defined to be $\UE(1_E)$.

\begin{lemma}\label{lem:representation}
  For any bounded nonnegative variable $X$,
  \begin{equation}\label{eq:representation}
    \UE(X)
    \le
    \int_0^{\infty}
    \UP(X\ge u)
    \d u.
  \end{equation}
\end{lemma}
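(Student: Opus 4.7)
The plan is to transfer the classical layer-cake representation $\E(X) = \int_0^\infty \P(X\ge u)\,\dd u$ to the game-theoretic setting: I will bound $X$ from above pointwise by a finite step function, apply sublinearity of $\UE$ term by term, and then refine the partition. First I would record three elementary properties of $\UE$, all immediate from its definition via Sceptic's strategies: (i) monotonicity, since the defining inequality $\alpha\K_N\ge X$ only gets easier if $X$ decreases; (ii) positive homogeneity $\UE(cX)=c\,\UE(X)$ for $c\ge 0$, by scaling Sceptic's bets by $c$; and (iii) finite subadditivity $\UE(X+Y)\le\UE(X)+\UE(Y)$, by adding two strategies. Both (ii) and (iii) require the standard $\epsilon$-slack because $\UE$ is defined as an infimum, but this is routine. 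In particular $\UE(c\,1_E)=c\,\UP(E)$ for every event $E$ and $c\ge 0$.

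Since $X$ is bounded, I fix $M$ with $0\le X\le M$ and a partition $0=u_0<u_1<\dots<u_n=M$. The pointwise inequality
\[
  X
  \le
  \sum_{i=1}^n (u_i-u_{i-1})\,1_{\{X\ge u_{i-1}\}}
\]
holds because, for $\omega$ with $X(\omega)\in[u_{k-1},u_k)$, the right-hand side evaluates to $\sum_{i=1}^k (u_i-u_{i-1})=u_k\ge X(\omega)$. Applying properties (i)--(iii) term by term yields
\[
  \UE(X)
  \le
  \sum_{i=1}^n (u_i-u_{i-1})\,\UP(X\ge u_{i-1}).
\]

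The function $u\mapsto\UP(X\ge u)$ is nonincreasing (by monotonicity of $\UE$), so the right-hand side above is an upper Riemann sum for this function on $[0,M]$. As the mesh of the partition tends to zero, the sum converges to $\int_0^M \UP(X\ge u)\,\dd u$, which coincides with $\int_0^\infty \UP(X\ge u)\,\dd u$ because $\UP(X\ge u)=0$ for $u>M$. This gives \eqref{eq:representation}. There is no serious obstacle here: the only nontrivial ingredients are the sublinearity properties of $\UE$ (with the harmless $\epsilon$-approximation built into the infimum) and Riemann integrability of a bounded monotone function on a compact interval.
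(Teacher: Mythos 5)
Your proof is correct and takes essentially the same approach as the paper: the paper also bounds $X$ pointwise by a step function built from indicators $1_{\{X\ge k\epsilon\}}$ on a uniform mesh, combines $\epsilon$-optimal Sceptic strategies for these events (which is exactly your subadditivity and positive homogeneity of $\UE$, made explicit), and lets the mesh tend to zero so that the resulting upper Riemann sum converges to the integral.
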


\begin{proof}
  Set $f(u):=\UP(X\ge u)$;
  then $f:[0,\infty)\to[0,1]$ is a decreasing function.
  Replace the $\infty$ in \eqref{eq:representation}
  by $C$ for some upper bound $C$ for $X$.
  For each $k=0,\dots,\lceil C/\epsilon\rceil$,
  fix a strategy for Sceptic that turns $f(k\epsilon)+\epsilon$
  into $1_{\{X\ge k\epsilon\}}$ or more.
  Multiplying this strategy and its initial capital by $\epsilon$
  and then summing over the $k$,
  we obtain a strategy that turns
  \begin{equation}\label{eq:initial}
    \sum_{k=0}^{\lceil C/\epsilon\rceil}
    \epsilon(f(k\epsilon)+\epsilon)
  \end{equation}
  into at least
  \[
    \sum_{k=0}^{\lceil C/\epsilon\rceil}
    \epsilon 1_{\{X(\omega)\ge k\epsilon\}}
    \ge
    X(\omega).
  \]
  It remains to notice that \eqref{eq:initial}
  tends to $\int_0^C f(u) \d u$ as $\epsilon\to0$.
\end{proof}

\subsection{Proof of Theorem~\ref{thm:optimal}}
\label{subsec:optimal}

This subsection uses the definitions and results from game-theoretic probability
given in Sect.~\ref{subsec:GTP}.
The reader familiar with measure-theoretic probability
who encounters game-theoretic probability for the first time
might prefer to read Appendix~\ref{app:martingale-SLLN} first
as a gentle introduction to the rest of this section.

We will also need the following lemma,
which is widely used in robust risk aggregation
(and our use of this lemma will mimic its uses in robust risk aggregation).

\begin{lemma}\label{lem:EP}
  For any $C>0$, any $\alpha\in(0,C/K)$,
  and any $x_1,\dots,x_K\in\R$,
  \begin{equation}\label{eq:EP}
    \sum_{k=1}^K
    g(x_k)
    \ge
    1_{\{\sum_{k=1}^K x_k \ge C\}},
  \end{equation}
  where $g$ is the continuous function
  \begin{equation}\label{eq:g}
    g(x)
    :=
    \begin{cases}
      0 & \text{if $x<C/K-\alpha$}\\
      \frac{x-(C/K-\alpha)}{K\alpha} & \text{if $C/K-\alpha\le x\le C/K+(K-1)\alpha$}\\
      1 & \text{if $x>C/K+(K-1)\alpha$}.
    \end{cases}
  \end{equation}
\end{lemma}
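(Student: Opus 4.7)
The approach is an elementary case analysis based on a piecewise-linear underestimate of $g$. Observe that the middle branch of $g$ is a straight line of slope $1/(K\alpha)$ rising from $0$ at $x=C/K-\alpha$ to $1$ at $x=C/K+(K-1)\alpha$, and $g$ is constant outside that interval. The natural move is therefore to bound $g$ below by the affine function $\ell(x):=(x-(C/K-\alpha))/(K\alpha)$ wherever this makes sense, and to handle the case when some $x_k$ is so large that it saturates $g$ separately.

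First I would dispose of the easy case: if some $x_k > C/K+(K-1)\alpha$, then $g(x_k)=1$ and, since $g\ge 0$ everywhere, the inequality \eqref{eq:EP} holds trivially (regardless of whether the indicator on the right is $0$ or $1$). So I may assume from now on that $x_k\le C/K+(K-1)\alpha$ for every $k$.

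Under this assumption I claim $g(x_k)\ge \ell(x_k)$ for each $k$. Indeed, if $x_k\ge C/K-\alpha$ we are on the middle branch and there is equality; if $x_k<C/K-\alpha$ then $g(x_k)=0$ while $\ell(x_k)<0$, so the inequality holds strictly. Summing over $k$,
\[
  \sum_{k=1}^K g(x_k)
  \ge
  \sum_{k=1}^K \ell(x_k)
  =
  \frac{\sum_{k=1}^K x_k - K(C/K-\alpha)}{K\alpha}
  =
  \frac{\sum_{k=1}^K x_k - C + K\alpha}{K\alpha}.
\]
When $\sum_{k=1}^K x_k \ge C$, the right-hand side is at least $K\alpha/(K\alpha)=1$, which is precisely \eqref{eq:EP}; when $\sum_{k=1}^K x_k < C$ there is nothing to prove since the right-hand side of \eqref{eq:EP} is zero and the left-hand side is nonnegative.

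There is no real obstacle here: the whole point is to choose a piecewise-linear $g$ whose linear branch has exactly the slope $1/(K\alpha)$ that makes the sum-based lower bound collapse to $1$ on the threshold $\sum x_k=C$. The only thing to be careful about is the saturated regime $x_k>C/K+(K-1)\alpha$, where the affine bound $\ell$ overshoots $1$ and must be replaced by the direct argument in the first step.
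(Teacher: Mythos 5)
Your proof is correct, but it takes a genuinely different (and more elementary) route than the paper's. The paper argues by contradiction: assuming $\sum_k g(x_k)<1$ while $\sum_k x_k\ge C$, it first performs a mass-transfer step (shifting $t$ from an $x_j$ above the upper kink to an $x_i$ below the lower kink, which changes neither $\sum_k x_k$ nor $\sum_k g(x_k)$) to reduce to the case $x_k\le C/K+(K-1)\alpha$ for all $k$, and then applies Jensen's inequality to the convex increasing truncation $g|_{(-\infty,C/K+(K-1)\alpha]}$ to conclude $\frac1K\sum_k g(x_k)\ge g(C/K)=1/K$. You dispose of the saturated case ($x_k>C/K+(K-1)\alpha$ for some $k$) directly, and then replace both the rearrangement step and Jensen by the single observation that $g$ dominates the affine function $\ell(x)=(x-(C/K-\alpha))/(K\alpha)$ on $(-\infty,C/K+(K-1)\alpha]$, after which the sum telescopes to at least $1$ when $\sum_k x_k\ge C$. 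The two arguments rest on the same underlying fact (convexity of the truncated $g$; your $\ell$ is exactly its supporting line along the middle branch), but yours is direct, shorter, and avoids the slightly delicate ``without loss of generality'' reduction; the paper's version has the advantage of mirroring the standard proof of the dual bound in robust risk aggregation (Embrechts--Puccetti), which is the context it cites. No gaps.
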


\begin{proof}
  We argue indirectly.
  Suppose there is a set of numbers $x_1,\dots,x_K$
  for which \eqref{eq:EP} holds with ``$<$'' in place of ``$\ge$'',
  and let us fix such a set.
  If $x_i<C/K-\alpha$ and $x_j>C/K+(K-1)\alpha$,
  we can replace $x_i$ by $x_i+t$ and $x_j$ by $x_j-t$,
  where $t>0$ is the smallest number such that
  $x_i+t=C/K-\alpha$ or $x_j-t=C/K+(K-1)\alpha$;
  therefore, we can assume, without loss of generality, that there is no such pair $(i,j)$.
  In this case, $x_k\le C/K+(K-1)\alpha$ for all $k$,
  but perhaps $x_j<C/K-\alpha$ for some $j$.
  It remains to apply Jensen's inequality
  to the convex (and increasing) function $g|_{(-\infty,C/K+(K-1)\alpha]}$:
  as the average of $x_k$ is at least $C/K$,
  the average of $g(x_k)$ is at least $g(C/K)=1/K$.
\end{proof}

\noindent
See the proof of Theorem~4.2 in \citet{Embrechts/Puccetti:2006}
for another proof of Lemma~\ref{lem:EP},
and see Appendix~\ref{app:martingale-SLLN} for further information
about robust risk aggregation.

Set $Q:=\lfloor N/K\rfloor$.
Let us first assume that $N=Q K+K-1$;
later we will get rid of this assumption
(it will be easy as $N=Q K+K-1$ is, in a sense, the worst case).

To get a handle on the difference $\Loss_N(A)-\Loss_N$ in Protocol~\ref{prot:DM-4},
we first consider its increment
\begin{equation}\label{eq:increment}
  \lambda(d_i^A,y_i\dots y_{i+K-1})
  -
  \lambda(d_i,y_i\dots y_{i+K-1})
\end{equation}
on step $i\le N-K+1$,
where $d_i^A$ is the prediction output
by the strategy $A$ defined by \eqref{eq:AA}.
By the choice of $d_i^A$,
the difference \eqref{eq:increment} is a supermartingale difference,
meaning that its $\E_i$ expectation is nonpositive.
Namely,
\begin{multline*}
  \E_i
  \bigl(
    \lambda(d_i^A,y_i\dots y_{i+K-1})
    -
    \lambda(d_i,y_i\dots y_{i+K-1})
  \bigr)\\
  =
  \sum_{x\in\mathbf{Y}^{K}}
  \left(
    \lambda(d^A_i,x)
    -
    \lambda(d_i,x)
  \right)
  P_i(x)
  \le
  0.
\end{multline*}

For each $k\in\{1,\dots,K\}$,
we consider the process
\begin{multline}\label{eq:L-1}
  L^k_n
  =
  \E_n
  \sum_{i\in\{k,k+K,\dots,k+(Q-1)K\}}
  \biggl(
    \lambda(d_i^A,y_i\dots y_{i+K-1})
    -
    \lambda(d_i,y_i\dots y_{i+K-1})\\
    +
    \sum_{x\in\mathbf{Y}^{K}}
    \left(
      \lambda(d_i,x)
      -
      \lambda(d_i^A,x)
    \right)
    P_i(x)
  \biggr);
\end{multline}
including only every $K$th step in the sum simplifies the analysis
and, more importantly, makes the result stronger
(cf.\ Remark~\ref{rem:strong-law}).
This process starts from zero,
and it is a game-theoretic martingale
(namely, $L^k_n=\K_{n-1}$ for some strategy for Sceptic
in the modification of Protocol~\ref{prot:DM-4}
replacing $\K_n:=1$ by $\K_n:=0$ and allowing $\K$ to become negative),
as the following explicit expression shows:
\begin{multline}\label{eq:L-2}
  L^k_n
  :=
  \sum_{i\in\{k,k+K,\dots,k+(q-1)K\}}
  \biggl(
    \lambda(d_i^A,y_i\dots y_{i+K-1})
    -
    \lambda(d_i,y_i\dots y_{i+K-1})\\
    +
    \sum_{x\in\mathbf{Y}^K}
    \left(
      \lambda(d_i,x)
      -
      \lambda(d_i^A,x)
    \right)
    P_i(x)
  \biggr)\\
  +
  \sum_{x\in\mathbf{Y}^{K-j}}
  \bigl(
    \lambda(d_{k+q K}^A,y_{k+q K}\dots y_{n-1}x)
    -
    \lambda(d_{k+q K},y_{k+q K}\dots y_{n-1}x)
  \bigr)
  P_n(x)\\
  +
  \sum_{x\in\mathbf{Y}^{K}}
  \left(
    \lambda(d_{k+q K},x)
    -
    \lambda(d_{k+q K}^A,x)
  \right)
  P_{k+q K}(x)
\end{multline}
where $q$ and $j\in\{0,\dots,K-1\}$ are the integers
from the representation $n=k+q K+j$,
and we are only interested in $n\le Q K$.
The first sum (i.e., the sum $\sum_{i\in\{k,k+K,\dots,k+(q-1)K\}}$) in \eqref{eq:L-2}
includes the terms \eqref{eq:increment} (for $i\equiv k \pmod{K}$)
that are determined by the checkpoint on step $n$.
The rest of the expression in \eqref{eq:L-2}
accounts for the term \eqref{eq:increment} that is partially determined,
which corresponds to $i=k+q K$.
And we do not have terms corresponding to $i>k+q K$
since at the checkpoint on step $n$ the expectation
of the expression in the outer parentheses in \eqref{eq:L-1}
is still 0 for such~$i$.

To check that \eqref{eq:L-2} is indeed a game-theoretic martingale,
it suffices to notice that
\begin{equation*}
  L^k_{n} - L^k_{n-1}
  =
  \sum_{x\in\mathbf{Y}^{K-j}}
  f_{n-1}(y_{n-1}x)
  P_n(x)
  -
  \sum_{x\in\mathbf{Y}^{K-j+1}}
  f_{n-1}(x)
  P_{n-1}(x),
\end{equation*}
where
\[
  f_{n-1}(x)
  :=
  \lambda(d_{k+q K}^A,y_{k+q K}\dots y_{n-2}x)
  -
  \lambda(d_{k+q K},y_{k+q K}\dots y_{n-2}x),
\]
has the same form as the capital increment in \eqref{eq:joint-K-P-4}.
This assumes that $n$ is not one of the borderline values $k+q K$,
which case should be considered separately.

If we only consider the values of the game-theoretic martingale \eqref{eq:L-2}
at steps $k+q K$, $q=0,1,\dots,Q$,
\begin{multline}\label{eq:L-3}
  L^k_{k+q K}
  :=
  \sum_{i\in\{k,k+K,\dots,k+(q-1)K\}}
  \biggl(
    \lambda(d_i^A,y_i\dots y_{i+K-1})
    -
    \lambda(d_i,y_i\dots y_{i+K-1})\\
    +
    \sum_{x\in\mathbf{Y}^K}
    \left(
      \lambda(d_i,x)
      -
      \lambda(d_i^A,x)
    \right)
    P_i(x)
  \biggr),
  \quad
  q=0,1,\dots,Q,
\end{multline}
its increments will be bounded by 2 in absolute value,
and we can apply the game-theoretic Hoeffding inequality
\citep[Corollary 3.8 for Protocol~3.5]{Shafer/Vovk:2019}
to it.
However, a tighter inequality is obtained
when we apply the one-sided version
of the game-theoretic Hoeffding inequality
\citep[Corollary 3.8 for Protocol~3.7]{Shafer/Vovk:2019}
to the process \eqref{eq:L-3}
with the sum over $x\in\mathbf{Y}^K$ removed.
This process is a game-theoretic supermartingale
whose increments are bounded by 1 in absolute value,
and the one-sided Hoeffding inequality gives
\begin{equation}\label{eq:Hoeffding-gives}
  \UP
  \left(
    X_k
    \ge
    U
  \right)
  \le
  \exp
  \left(
    -\frac{U^2}{2Q}
  \right)
  \le
  \exp
  \left(
    -U^2\frac{K}{2N}
  \right),
\end{equation}
where $U\ge0$ and
\[
  X_k
  :=
  \sum_{i\in\{k,k+K,\dots,k+(Q-1)K\}}
  \left(
    \lambda(d_i^A,y_i\dots y_{i+K-1})
    -
    \lambda(d_i,y_i\dots y_{i+K-1})
  \right);
\]
we assume that the game-theoretic supermartingale is constant after $k+(Q-1)K$
(the last $i$ in the range of summation in \eqref{eq:L-3}).

Applying~\eqref{eq:Hoeffding-gives}
and Lemmas~\ref{lem:representation} and~\ref{lem:EP}
(see below for details)
gives, for any $C>0$,
\begin{align}
  \UP&(X_1+\dots+X_K\ge C)
  \le
  \sum_{k=1}^K
  \UE(g(X_k))
  \le
  \sum_{k=1}^K
  \int_0^{\infty}
  \UP(g(X_k)\ge u)
  \d u\label{eq:1and2}\\
  &\le
  \sum_{k=1}^K
  \int_0^{\infty}
  \UP
  \left(
    X_k\ge\gamma\frac{C}{K}+(1-\gamma)C u
  \right)
  \d u\label{eq:3}\\
  &\le
  \sum_{k=1}^K
  \int_0^{\infty}
  \exp
  \left(
   -\left(
      \gamma\frac{C}{K}+(1-\gamma)C u
    \right)^2
    \frac{K}{2N}
  \right)
  \d u\label{eq:4}\\
  &=
  \frac{\sqrt{K N}}{(1-\gamma)C}
  \int_{\frac{\gamma C}{\sqrt{K N}}}^{\infty}
  \exp
  \left(
    -v^2/2
  \right)
  \d v\label{eq:5}\\
  &=
  \frac{\sqrt{K N}}{(1-\gamma)C}
  \sqrt{2\pi}
  \bar\Phi
  \left(
    \frac{\gamma C}{\sqrt{K N}}
  \right)
  <
  \frac{K N}{\gamma(1-\gamma)C^2}
  \exp
  \left(
    -\frac{\gamma^2 C^2}{2 K N}
  \right).\label{eq:6and7}
\end{align}
The first and second inequalities in \eqref{eq:1and2}
follow from Lemmas~\ref{lem:EP} and~\ref{lem:representation},
respectively.
The inequality~\eqref{eq:3} follows from the definition of $g$ in \eqref{eq:g}
with $\alpha:=(1-\gamma)C/K$.
Indeed, we can assume, without loss of generality, $u>0$,
and then $g(X)\ge u$ implies
\[
  \frac{X-(C/K-\alpha)}{K\alpha}
  \ge
  u,
\]
which is equivalent to
\[
  X
  \ge
  \gamma\frac{C}{K}+(1-\gamma)C u.
\]
The inequality~\eqref{eq:4} follows
from Hoeffding's inequality~\eqref{eq:Hoeffding-gives}.
The equality~\eqref{eq:5} follows by the substitution
\[
  v
  :=
  \frac{\gamma C}{\sqrt{K N}}
  +
  (1-\gamma)C
  \sqrt{\frac{K}{N}}
  u.
\]
The equality in~\eqref{eq:6and7} introduces the notation
$\bar\Phi:=1-\Phi$
for the survival function of the standard Gaussian distribution.
And the last inequality in the chain
follows by applying the standard upper bound \citep[Lemma~VII.1.2]{Feller:1968} on $\bar\Phi$.

We can rewrite the inequality between the extreme terms
in the chain~\eqref{eq:1and2}--\eqref{eq:6and7} as
\begin{equation}\label{eq:extreme}
  \UP
  \left(
    \Loss_N(A) - \Loss_N
    \ge
    C
  \right)
  \le
  \frac{K N}{\gamma(1-\gamma)C^2}
  \exp
  \left(
    -\frac{\gamma^2 C^2}{2 K N}
  \right).
\end{equation}
Comparing this with \eqref{eq:optimal},
we can see that we need to solve the inequality
\begin{equation}\label{eq:inequality-1}
  \frac{K N}{\gamma(1-\gamma)C^2}
  \exp
  \left(
    -\frac{\gamma^2 C^2}{2 K N}
  \right)
  \le
  \epsilon.
\end{equation}
Ignoring the part before the $\exp$ and replacing ``$\le$'' by ``$=$'',
we obtain the solution
\[
  C
  =
  \frac{\sqrt{2 K N \ln\frac{1}{\epsilon}}}{\gamma},
\]
which motivates the substitution
\begin{equation}\label{eq:C}
  C
  :=
  \frac{\sqrt{2 K N \ln\frac{1}{\epsilon}x}}{\gamma}
\end{equation}
in \eqref{eq:inequality-1}.
After this substitution,
\eqref{eq:inequality-1} simplifies to
\begin{equation}\label{eq:inequality-2}
  \epsilon^{x-1}
  \le
  2 \frac{1-\gamma}{\gamma} x \ln\frac{1}{\epsilon}.
\end{equation}
Setting $x:=2\gamma^2$ in \eqref{eq:C}
gives an expression that matches the corresponding expression in \eqref{eq:optimal}.

The condition $x>1$
(required for \eqref{eq:inequality-2} to hold as $\epsilon\to0$)
narrows down the range of $\gamma$ from $\gamma\in(0,1)$ to $\gamma\in(2^{-1/2},1)$.
Setting, e.g., $\gamma:=0.8$
ensures that \eqref{eq:inequality-2} holds for all $\epsilon\in(0,0.32)$.

It remains to consider the case $N<Q K+K-1$.
If the final value of $L^k_{k+q K}$ (corresponding to $q=Q$)
is undefined (because $k+q K+K-1>N$),
we set it equal to its previous value (for $q=Q-1$).

This completes the proof of Theorem~\ref{thm:optimal}.
Inequality~\eqref{eq:no-restriction} follows from \eqref{eq:extreme}
with $\gamma:=1/\sqrt{2}$.

\subsection{Proof of Proposition~\ref{prop:lower-bound-1}}
\label{subsec:lower-bound-1}

Similarly to \eqref{eq:binary-loss},
let us set $\mathbf{D}:=\mathbf{Y}:=\{0,1\}$ and
\begin{equation}\label{eq:lambda}
  \lambda_n(d_n,y_n\dots y_{n+K-1})
  :=
  \begin{cases}
    1 & \text{if $d_n\ne y_{\lceil n/K\rceil K}$}\\
    0 & \text{otherwise}.
  \end{cases}
\end{equation}
We are only interested in $n\le N-K+1$
(see \eqref{eq:Loss-mod}),
which implies $n+K-1\le N$ and $\lceil n/K\rceil K\le N$;
therefore, \eqref{eq:lambda} is well-defined.
Now the true probability measure $P$ is such that
$y_n=1$ with probability $1/2$ independently for different $n$
(and now we will rely on our tie-breaking convention).
As in Sect.~\ref{subsec:essential}, the players comply with $P$.
Let $B$ be the decision strategy that always outputs $1$;
notice that $A$ always outputs $0$
(assuming that the linear order on $\mathbf{D}$ is $0<1$).
It suffices to prove \eqref{eq:lower-bound-1}
(and later  \eqref{eq:lower-bound-2})
with $\Loss_N$ replaced by $\Loss_N(B)$,
and this is what we will do.

The $N$ steps are now split into $\lceil N/K\rceil$ blocks of $K$ steps
(except, possibly, the last block),
$n\in\{1,\dots,K\}$,
$n\in\{K+1,\dots,2K\}$,
etc.
Within each block,
$A$ suffers the same loss at each step,
and $B$ suffers the same loss at each step.
By the central limit theorem,
the probability is at least $\epsilon$ (a universal positive constant)
that $A$ performs worse than $B$
in at least $\sqrt{N/K}+1$ more blocks than vice versa.
In such cases
\[
  \Loss_N(A) - \Loss_N(B)
  \ge
  K\sqrt{N/K}
  =
  \sqrt{K N}.
\]
This gives \eqref{eq:lower-bound-1} with $P$ in place of $\UP$.
By Ville's inequality,
we can replace the probability measure $P$
by the upper game-theoretic probability $\UP$.

\subsection{Proof of Proposition~\ref{prop:lower-bound-2}}
\label{subsec:lower-bound-2}

We will obtain Proposition~\ref{prop:lower-bound-2}
by applying a lower bound for large deviations
in the form of \citet[Proposition 7.3.2]{Matousek/Vondrak:2008}
to the argument of the previous subsection.
As mentioned before the statement of the proposition,
now we define $\Loss_N$ by summing the losses over all steps $n=1,\dots,N$,
as in \eqref{eq:Loss}.
The loss at each step $n$ is still given by the right-hand side of \eqref{eq:lambda}.
This is the derivation (see below for some explanations):
\begin{align}
  \UP&
  \left(
    \Loss_N(A) - \Loss_N(B)
    \ge
    \sqrt{K N \ln\frac{1}{\epsilon}}
  \right)\notag\\
  &=
  \UP
  \left(
    \frac1K\Loss_N(A)
    \ge
    \frac{N}{2K}
    +
    \frac12
    \sqrt{\frac{N}{K} \ln\frac{1}{\epsilon}}
  \right)\label{eq:c1}\\
  &=
  \UP
  \left(
    X
    \ge
    \frac{n}{2}
    +
    t
  \right)
  \ge
  \frac{1}{15}
  \exp
  \left(
    -16 t^2 / n
  \right)
  =
  \epsilon^4/15.\label{eq:c2}
\end{align}
The first equality, \eqref{eq:c1},
follows from
\begin{equation}\label{eq:sum}
  \Loss_N(A) + \Loss_N(B) = N
\end{equation}
(which allows us to eliminate $\Loss_N(B)$)
and obvious transformations.
The first equality in~\eqref{eq:c2}
introduces the notation
\[
  X := \frac1K\Loss_N(A),
  \quad
  n := \frac{N}{K},
  \quad
  t
  :=
  \frac12
  \sqrt{\frac{N}{K} \ln\frac{1}{\epsilon}},
\]
which is the notation
used in \citet[Proposition 7.3.2]{Matousek/Vondrak:2008}.
The inequality ``$\ge$'' in~\eqref{eq:c2}
is identical to \citet[Proposition 7.3.2]{Matousek/Vondrak:2008}.

\citet{Matousek/Vondrak:2008} have two conditions
in their Proposition 7.3.2:
$t\le n/8$, which becomes~\eqref{eq:condition},
and $t$ being an integer,
which we strengthen to $\sqrt{N/K}$ being an even integer
and $\sqrt{\ln\frac{1}{\epsilon}}$ being an integer.

\begin{remark}
  \citet[Lemma~3]{Kunsch/Rudolf:2019}
  slightly improve the constants in \citet[Proposition 7.3.2]{Matousek/Vondrak:2008},
  and using their result we can improve the bound $\epsilon^4/15$
  in \eqref{eq:c2} to $\epsilon^3/5$.
  This allows us to rewrite \eqref{eq:corollary} in the form
  \begin{equation}\label{eq:KR}
    \UP
    \left(
      \Loss_N(A) - \Loss_N(B)
      \ge
      \sqrt{\frac13 K N \ln\frac{1}{5\epsilon}}
    \right)
    \ge
    \epsilon.
  \end{equation}
\end{remark}

\begin{remark}
  Let us check informally
  what the optimal counterparts of the constants $1/3$ and $5$ in \eqref{eq:KR}
  would be in the domain of applicability of the central limit theorem.
  We have for the probability measure $P$ of Sect.~\ref{subsec:lower-bound-1}:
  \begin{equation*}
    P
    \left(
      \Loss_N(A) - \Loss_N(B)
      \ge
      \bar\Phi^{-1}(\epsilon)
      \sqrt{K N}
    \right)
    \approx
    \epsilon,
  \end{equation*}
  where $\bar\Phi^{-1}(\epsilon)$ is the upper $\epsilon$-quantile
  of the standard Gaussian distribution.
  This follows from the variance of
  \[
    \Loss_N(A) - \Loss_N(B)
    =
    2\Loss_N(A) - N
  \]
  (cf.\ \eqref{eq:sum}) being approximately $K N$.
  This gives the ideal approximate equality
  \begin{equation*}
    P
    \left(
      \Loss_N(A) - \Loss_N(B)
      \ge
      \sqrt{2 K N \ln\frac{1}{\epsilon}}
    \right)
    \approx
    \epsilon
  \end{equation*}
  in place of \eqref{eq:KR}.
  It is interesting that this is exactly what we get from \eqref{eq:C}
  when we make $\gamma\approx1$ and $x\approx1$
  (it is clear that we can make $\gamma\in(0,1)$ and $x>1$
  as close to 1 as we want
  at the price of restricting $\epsilon$ to a narrower range $(0,\epsilon^*)$).
\end{remark}

\section{Protocols in terms of ideal futures}
\label{app:ideal}

Testing by betting in general and game-theoretic probability in particular
can be interpreted in terms of trading in a financial market,
and in this appendix I will make this interpretation explicit.
Now we complement the basic forecasting protocol
with an idealized market allowing Sceptic to trade in futures contracts
(these are the most standard financial derivatives;
see, e.g., \citealt[Chap.~2]{Hull:2021}, and \citealt{Duffie:1989}).
Futures contracts is an old idea (see, e.g., \citealt{Schaede:1989})
that arose gradually in financial industry,
but in our prediction protocols it is a powerful way
of reducing prediction multiple steps ahead to one-step-ahead prediction.
In this appendix we will only need a highly idealized picture of them
(Sect.~\ref{subsec:ideal}),
but later (Appendix~\ref{app:real}) we will discuss their real-life counterparts.

We will also need another piece of notation:
$\mathbf{Y}^{m:n}$ stands
for the set of all sequences of elements of $\mathbf{Y}$ of length between $m$ and $n$ inclusive
(so that $\mathbf{Y}^{0:n}$ stands for the sequences of elements of $\mathbf{Y}$ of length at most $n$,
and $\mathbf{Y}^{1:n}$ stands for the non-empty sequences of elements of $\mathbf{Y}$ of length at most $n$).

\subsection{Ideal futures contracts}
\label{subsec:ideal}

In Sect.~\ref{sec:test} we extended the forecasting picture of Sect.~\ref{sec:basic}
by allowing Sceptic to bet against Forecaster.
Betting was described in terms of tickets,
which are known as forward contracts in finance.
In our diachronic picture, however,
we allowed trade in tickets (they were sold and bought),
which essentially turned them 
into what is known as futures contracts in finance.
In this appendix we will talk about futures contracts explicitly
using very convenient terminology developed in finance.
Our terminology, however, will be slightly adapted to our needs
(for example, the unit of time will be a step rather than, e.g., a day,
and the trader will be called Sceptic).

A futures contract $\Phi$ has an \emph{expiration step} $m$.
The contract is settled at the end of step $m$;
namely, its final price $F_m^+$ is announced by Reality.
In the middle of step $n\in\{1,\dots,m\}$,
the current price $F_n$ of $\Phi$ is announced by Forecaster,
and Sceptic can then take any \emph{position} $f_n\in\R$ in $\Phi$.
If $n<m$, Sceptic gains capital $f_n(F_{n+1}-F_n)$ at the next step
(which actually means losing capital if $f_n(F_{n+1}-F_n)<0$).
If $n=m$, at the end of the expiration step $m$ (at \emph{maturity})
Sceptic gains $f_{m}(F^+_m-F_{m})$.
These gains keep accumulating as the play proceeds.

\subsection{General testing protocol}

The following extension of Protocol~\ref{prot:joint} describes another,
seemingly stronger than Protocol~\ref{prot:joint-test-final},
way of testing Forecaster's predictions.
(However, we will reduce the extended protocol
to Protocol~\ref{prot:joint-test-final}
in Sect.~\ref{subsec:final}.)

\begin{protocol}\label{prot:joint-test-general}\ \\
  \indentI $\K_0 := 1$\\
  \indentI Forecaster announces $P_1\in\mathfrak{P}(\mathbf{Y}^{N})$\\
  \indentI Sceptic announces $f_1\in\R^{\mathbf{Y}^{1:N}}$\\
  \indentI Reality announces $y_1\in\mathbf{Y}$\\
  \indentI $\K'_1 := \K_0 + f_1(y_1) - \sum_{y\in\mathbf{Y}} f_1(y) P_1(y)$\\
  \indentI FOR $n=2,\dots,N$:\\
    \indentII Forecaster announces $P_n\in\mathfrak{P}(\mathbf{Y}^{N-n+1})$\\
    \indentII $\K_{n-1} := \K'_{n-1}
        + \sum_{x\in\mathbf{Y}^{1:(N-n+1)}} f_{n-1}(y_{n-1}x) P_{n}(x)$\\
        \indentIV ${}-\sum_{x\in\mathbf{Y}^{2:(N-n+2)}} f_{n-1}(x) P_{n-1}(x)$%
        \hfill\refstepcounter{equation}(\theequation)\label{eq:joint-K-P-0}\\
    \indentII Sceptic announces $f_n\in\R^{\mathbf{Y}^{1:(N-n+1)}}$\\
    \indentII Reality announces $y_n\in\mathbf{Y}$\\
    \indentII $\K'_n := \K_{n-1} + f_n(y_n) - \sum_{y\in\mathbf{Y}} f_n(y) P_n(y)$.
        \hfill\refstepcounter{equation}(\theequation)\label{eq:joint-K-y-0}
\end{protocol}

\noindent
Protocol~\ref{prot:joint-test-general} does not define $\K_N$,
and we set $\K_N:=\K'_N$.
The interpretation of $\K_N$ is the same as for Protocol~\ref{prot:joint-test-final}:
a large $\K_N$ evidences lack of agreement of the forecasts with reality,
provided Sceptic is not allowed to go into debt.

An advantage of Protocol~\ref{prot:joint-test-general} over Protocol~\ref{prot:joint-test-final}
is that, even though it is stated for a finite time horizon $N$,
it is easier to modify to make the time horizon infinite,
so that $n=2,3,\dots$ in the FOR loop.
The simpler protocol of Sect.~\ref{sec:test}
uses the finiteness of the time horizon in a more essential way.

The financial interpretation of Protocol~\ref{prot:joint-test-general}
is that we have a market of futures contracts $\Phi(x)$,
$x\in\mathbf{Y}^{1:N}$,
that pay
\[
  F_m^+(x):=1_{\{y_1\dots y_m = x\}}
\]
at the end of step $m:=\lvert x\rvert$,
as discussed in Sect.~\ref{subsec:ideal}.
At each step $n$ (but before observing $y_n$)
Forecaster announces the prices for all the futures contracts
$\Phi(x)$, $x\in\mathbf{Y}^{1:N}$,
in the form of a probability measure $P_n\in\mathfrak{P}(\mathbf{Y}^{N-n+1}$);
namely, the price of $\Phi(x)$, $x\in\mathbf{Y}^{1:N}$, at step $n$ is
\begin{equation}\label{eq:F-P}
  F_n(x)
  :=
  \begin{cases}
    P_n(x\setminus y_1\dots y_{n-1}) & \text{if $y_1\dots y_{n-1}\subset x$}\\
    0 & \text{if not}.
  \end{cases}
\end{equation}
A standard argument shows that such $P_n$ will exist
provided the market is coherent;
Sceptic can secure a sure gain if $F_n$ do not form a probability measure
concentrated on the continuations of $y_1\dots y_{n-1}$
\citep[Chap.~3]{deFinetti:2017}.

At step $n$ Sceptic needs to take positions
in all $\Phi(x)$, $y_1\dots y_{n-1}\subset x\in\mathbf{Y}^{1:N}$.
The position in $\Phi(y_1\dots y_{n-1}x)$ is denoted by $f_n(x)$
in Protocol~\ref{prot:joint-test-general}.
(There is no need to take positions in the other $\Phi(x)$
since their prices are 0 and will stay 0.)

After $y_n$ is disclosed by Reality,
the increment in Sceptic's capital
(due to the matured futures contracts $\Phi(y_1\dots y_{n-1}y)$)
is
\begin{align*}
  \K'_n-\K_{n-1}
  &=
  \sum_{y\in\mathbf{Y}}
  f_n(y)
  \left(
    F_n^+(y_1\dots y_{n-1}y)
    -
    F_n(y_1\dots y_{n-1}y)
  \right)\\
  &=
  f_n(y_n)
  -
  \sum_{y\in\mathbf{Y}}
  f_n(y)
  P_n(y),
\end{align*}
which agrees with \eqref{eq:joint-K-y-0}.
And after $P_n$ is disclosed by Forecaster at the next step $n:=n+1$,
the increment in Sceptic's capital
(due to the remaining futures contracts)
is
\begin{align*}
  \K_{n-1}-\K'_{n-1}
  &=
  \sum_{x\in\mathbf{Y}^{2:(N-n+2)}}
  f_{n-1}(x)
  \bigl(
    F_{n}(y_1\dots y_{n-2} x)
    -
    F_{n-1}(y_1\dots y_{n-2} x)
  \bigr)\\
  &=
  \sum_{x\in\mathbf{Y}^{1:(N-n+1)}} f_{n-1}(y_{n-1}x) F_{n}(y_1\dots y_{n-1}x)\\
  &\quad-
  \sum_{x\in\mathbf{Y}^{2:(N-n+2)}} f_{n-1}(x) F_{n-1}(y_1\dots y_{n-2}x)\\
  &=
  \sum_{x\in\mathbf{Y}^{1:(N-n+1)}} f_{n-1}(y_{n-1}x) P_{n}(x)
  -
  \sum_{x\in\mathbf{Y}^{2:(N-n+2)}} f_{n-1}(x) P_{n-1}(x),
\end{align*}
where the last equality follows from \eqref{eq:F-P},
and the last expression agrees with \eqref{eq:joint-K-P-0}.

\subsection{Simplification}
\label{subsec:final}

We can rewrite Protocol~\ref{prot:joint-test-general} in other forms,
such as Protocol~\ref{prot:joint-test-final},
getting rid of some of Sceptic's arbitrary choices.
To compare protocols with the same allowed moves for Reality and Forecaster,
we can use the notion of the \emph{test martingale space} (TMS),
which we define modifying the definition given in Sect.~\ref{sec:embedding}
as follows.
A strategy for Sceptic still specifies his move
as function of Forecaster's and Reality's previous moves,
but now we do not impose any measurability conditions on strategies.
As before, the corresponding \emph{test martingale} is Sceptic's capital
as function of Forecaster's and Reality's moves provided this function is nonnegative.
The TMS of a given protocol is the set of all possible test martingales.
We regard two protocols to be equivalent if they have the same TMS.

As already mentioned,
the general testing protocol, Protocol~\ref{prot:joint-test-general},
was formulated with a view towards an infinite time horizon,
where $N$ becomes $\infty$.
In Sect.~\ref{sec:test} we introduced a much simpler protocol
using an idea that only works for a finite time horizon.

\begin{proposition}\label{prop:equivalent-final}
  Protocol~\ref{prot:joint-test-general} and Protocol~\ref{prot:joint-test-final}
  have identical TMS.
\end{proposition}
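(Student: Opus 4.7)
The comparison makes sense because Protocols~\ref{prot:joint-test-general} and~\ref{prot:joint-test-final} share exactly the same set of non-terminal situations $(P_1,y_1,\dots,P_n)$, $n\in\{1,\dots,N\}$, and the same terminal situation $(P_1,y_1,\dots,P_N,y_N)$. The inclusion of the TMS of Protocol~\ref{prot:joint-test-final} into that of Protocol~\ref{prot:joint-test-general} is immediate: given a Sceptic strategy in Protocol~\ref{prot:joint-test-final} producing moves $\tilde f_n\in\R^{\mathbf{Y}^{N-n+1}}$, use the same $\tilde f_n$, extended by zero on the shorter strings, as the move $f_n\in\R^{\mathbf{Y}^{1:(N-n+1)}}$ in Protocol~\ref{prot:joint-test-general}. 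For $n<N$ the length-$1$ component of $f_n$ vanishes, so $\K'_n=\K_{n-1}$, and the $\K_{n-1}$-update of the general protocol collapses onto the sums over $\mathbf{Y}^{N-n+1}$ and $\mathbf{Y}^{N-n+2}$, reproducing the update in Protocol~\ref{prot:joint-test-final}; at $n=N$ the last-step updates already coincide.

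For the reverse inclusion, the plan is to use the replication idea from the financial picture of Sect.~\ref{subsec:ideal}. A futures contract $\Phi(y_1\dots y_{n-1}x)$ with $|x|<N-n+1$ has the same payoff as, and by the marginalization property of $P_n$ also the same price at every step as, the portfolio $\sum_{z\in\mathbf{Y}^{N-n+1-|x|}}\Phi(y_1\dots y_{n-1}xz)$ of its length-$N$ extensions. Therefore the position $f_n(x)$ in the shorter contract is replicated step for step by positions $f_n(x)$ in each of its length-$(N-n+1)$ extensions $y_1\dots y_{n-1}xz$. This prescribes, for any strategy $\{f_n\}$ in Protocol~\ref{prot:joint-test-general}, the replacement moves
\[
  \tilde f_n(u)
  :=
  \sum_{k=1}^{|u|} f_n(u^{(k)}),
  \quad
  u\in\mathbf{Y}^{N-n+1},
\]
where $u^{(k)}$ denotes the prefix of $u$ of length $k$; at $n=N$ the sum collapses to $\tilde f_N=f_N$, so that boundary case is automatic.

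The verification is then a direct computation. Using the marginalization identity $P_n(w)=\sum_{z}P_n(wz)$ (valid for any prefix $w$ of the appropriate length) together with a change of summation order, one checks that
\[
  \sum_{u\in\mathbf{Y}^{N-n+1}} \tilde f_n(u) P_n(u)
  =
  \sum_{w\in\mathbf{Y}^{1:(N-n+1)}} f_n(w) P_n(w).
\]
Splitting the telescoping expansion as $\tilde f_n(y_n x)=f_n(y_n)+\sum_{k=1}^{|x|} f_n(y_n x^{(k)})$ for $x\in\mathbf{Y}^{N-n}$, the isolated $f_n(y_n)$ term reproduces the length-$1$ settlement in the intermediate update $\K'_n-\K_{n-1}$ of Protocol~\ref{prot:joint-test-general}, while the remaining sum against $P_{n+1}$ reproduces the mark-to-market update $\K_n-\K'_n$. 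An induction on $n$ then yields equality of the two capitals at every situation, proving the claim. The only real obstacle is bookkeeping: one must match the telescoping identity cleanly against the general protocol's split between length-$1$ settlement and the subsequent mark-to-market, and verify that no off-by-one creeps in at $n=N$, where there is no further mark-to-market step.
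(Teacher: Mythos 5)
Your proposal is correct and follows essentially the same route as the paper: the prefix-sum map $\tilde f_n(u)=\sum_{k=1}^{\lvert u\rvert}f_n(u^{(k)})$ is exactly the closed form of the paper's repeated application of its elementary operation $O(x,c)$ (which shifts a position on a short string onto all its one-step extensions without changing the capital), and both rest on the same replication intuition that a contract $\Phi(x)$ equals the equal-priced portfolio of its final extensions. The marginalization computation and the $n=N$ boundary check you outline are the same verifications the paper performs.
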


\noindent
Proposition~\ref{prop:equivalent-final}
simplifies the market in futures contracts that we need:
all the contracts now mature at the end of step $N$;
we will call such futures contracts \emph{final}.
The intuitive reason why the final futures contracts are sufficient
is that a general futures contract $\Phi(x)$ is equivalent,
to all intents and purposes,
to the portfolio consisting of the final futures contracts $\Phi(x')$
for all $x'\supseteq x$.

\begin{proof}[Proof of Proposition~\ref{prop:equivalent-final}]
  Consider step $n<N$ of Protocol~\ref{prot:joint-test-general}.
  Let $O(x,c)$, where $x\in\mathbf{Y}^{1:(N-n)}$ and $c\in\R$,
  be the operation that adds the constant $c$ to $f_n(x)$
  and subtracts the same constant $c$ from all $f_n(x y)$, $y\in\mathbf{Y}$.
  The key observation used in our simplification of Protocol~\ref{prot:joint-test-general}
  is that, for any $x\in\mathbf{Y}^{1:(N-n)}$ and $c\in\R$,
  $O(x,c)$ does not change the increment in the capital $\K_n-\K_{n-1}$.
  Let us check this property.
  If $x\in\mathbf{Y}^{2:(N-n)}$,
  $O(x,c)$ will not affect \eqref{eq:joint-K-y-0} whatsoever,
  and it will change neither minuend nor subtrahend in \eqref{eq:joint-K-P-0}
  at the next step (there is a next step since $n<N$).
  And if $x\in\mathbf{Y}$,
  applying the operation $O(x,c)$ does not change
  the increment in the capital $\K_n-\K_{n-1}$
  given by \eqref{eq:joint-K-y-0} and then by \eqref{eq:joint-K-P-0} at the next step
  since
  \begin{itemize}
  \item
    the changes in the sum in \eqref{eq:joint-K-y-0}
    and in the second sum in \eqref{eq:joint-K-P-0} at the next step
    will balance each other out, and
  \item
    the changes in the term $f_n(y_n)$ in \eqref{eq:joint-K-y-0}
    and in the first sum in \eqref{eq:joint-K-P-0} at the next step
    will also balance each other out
    (this is relevant only when $x=y_n$).
  \end{itemize}

  Applying $O(x,c)$ repeatedly to the $x$s in the order of increasing length,
  we can assume, without loss of generality
  (i.e., without changing the TMS),
  that $f_n(x)$ is different from $0$ only for $x\in\mathbf{Y}^{N-n+1}$,
  which implies that:
  \begin{itemize}
  \item
    we can ignore \eqref{eq:joint-K-y-0} for all steps $n$ apart from $n=N$,
    and so \eqref{eq:joint-K-y-2} is performed only for $n=N$;
  \item
    we can ignore the bits ``$1:{}$'' and ``$2:{}$'' in \eqref{eq:joint-K-P-0},
    obtaining \eqref{eq:joint-K-P-2}.
  \end{itemize}
  Protocol~\ref{prot:joint-test-final} also merges
  the four lines in Protocol~\ref{prot:joint-test-general} preceding the FOR loop
  into the loop.
\end{proof}

\section{Mechanics of futures trading}
\label{app:real}

Section~\ref{subsec:ideal} gives an idealized picture of futures trading.
The main elements of simplification in it are:
\begin{itemize}
\item
  the interest rate is assumed to be zero;
\item
  the positions and futures prices are assumed to take any real values
  (although we are only interested in positive prices for futures contracts);
\item
  there is no difference between the selling and buying prices
  (no bid/ask spread);
\item
  there are no other transaction costs.
\end{itemize}

In this paper we are only interested in binary futures contracts
(where the outcome is 0 or 1).
However, the most popular market mechanism,
described in this appendix,
works for general futures contracts,
which are not restricted to the binary case.

A good reference for traditional futures markets is \citet{Duffie:1989}.
While some of the physical details of trading described in it might be obsolete,
the general principles are still applicable.
Another good reference is \citet{Harris:2003}.

By far the most popular platform for prediction markets
is the Iowa Electronic Markets (IEM).
The IEM was created in 1988 and has always been a small-scale operation;
the development of prediction markets has been greatly hindered
by the US anti-gambling regulation \citep{Arrow/etal:2008}.
The IEM was created by academics, and its role is mainly educational;
in particular, it has a great help system explaining the market microstructure
(which I often follow in this section).
It received two no-action letters, in 1992 and 1993,
from the US Commodity Futures Trading Commission (CFTC)
reducing the chance of legal action against it.
Its competitors sometimes have better bid/ask spreads,
but their positions are less secure;
e.g., Intrade (1999--2013) is now defunct
and PredictIt (launched in 2014) had their CFTC no-action letter withdrawn in 2022.

A futures contract is a contract that pays a specified amount $F_m$
at a specified future time, called the \emph{expiration time} $m$
(it was the expiration step in the main part of the paper).
The amount is uncertain at the time of trading but becomes well-defined
at the expiration time, when trading ceases.
An example of $m$ and $F_m$ is ``6 November 2024''
and ``Democratic Nominee's share of the two-party popular vote
in the 2024 US Presidential election'' in US dollars.
This is, essentially,
one of the types of futures contracts traded at the IEM in August 2023
(of the ``vote share'' variety; the other main variety is ``winner takes all'').
Let us fix $m$ and $F_m$.
At each time the market participants can hold any number of the futures contracts
(positive, zero, or negative),
which is known as their \emph{positions} in the futures contracts.
They can also submit orders to change their positions.
The main kinds of orders are \emph{market orders} and \emph{limit orders}.
A limit order specifies the number of futures contracts to buy or sell
at a given price (known as the \emph{bid price} for orders to buy
and the \emph{ask price} for orders to sell);
it may also specify the time when the order expires.

At the core of a futures market is the \emph{order book}
listing the outstanding limit orders.
The prices specified in those orders are
\begin{equation}\label{eq:prices}
  B_{n_B}<B_{n_B-1}<\dots<B_1<A_1<A_2<\dots<A_{n_A},
\end{equation}
where $n_B$ is the number of different bid prices in the currently active limit orders to buy
and $n_A$ is the number of different ask prices in the currently active limit orders to sell.
The prices in the list \eqref{eq:prices} are sorted in the ascending order,
and the difference $A_1-B_1$ is known as the \emph{bid/ask spread}.
With each price level $x$ is associated the total number $N(x)$ of futures contracts
that the market participants with active limit orders wish to trade
(to buy if $x=B_n$ for some $n$ and to sell if $x=A_n$ for some $n$;
$N(x)=0$ for all other $x$).
The order book consists of the prices \eqref{eq:prices}
and the numbers $N(x)$ of futures contracts offered at each price level $x$
(within each price level $x$ older orders appear before newer orders).
It consists of a \emph{bid queue}
(the data related to the bid prices)
and an \emph{ask queue}
(the data related to the ask prices).

A market order is simpler than a limit order
and only specifies the number of futures contracts to buy or sell.
When a new market order is submitted by a market participant MP,
it is matched with the order book immediately and a trade is performed.
Namely, if the order is to sell $N$ contracts,
the bid queue is traversed from the top (i.e., from $B_1$)
until the required number of orders to buy is found:
we find the smallest $k$ such that $N(B_1)+\dots+N(B_k)\ge N$
(all the $N(B_1)+\dots+N(B_{n_B})$ contracts requested in the bid queue
are bought if there is no such $k$)
and arrange a trade with MP selling all his futures contracts
to the market participants with active limit orders with the prices in $\{B_1,\dots,B_k\}$;
for the price $B_k$ only the oldest orders are fulfilled (perhaps partially).
The procedure for market orders to buy is analogous.

When a new limit order is submitted by a market participant,
it is simply added to the order book.
We can assume that the limit orders to buy specify prices below $A_1$
and the limit orders to sell specify prices above $B_1$
(otherwise, a market order can be submitted).
When a limit order in the order book expires,
it is, of course, removed from it.

An important element of futures markets is the system of \emph{margins}.
Typically market participants have positions in several types of futures contracts
(corresponding to different $m$ and $F_m$)
and other securities,
and the total values of their portfolios can go up or down.
To reduce the chance of the exchange losing money,
they are required to maintain margin accounts at specified levels.
If a margin account falls below the specified level
as result of changing market prices,
a \emph{margin call} is issued requiring the account to be replenished.

In the IEM, short (i.e., negative) positions are formally prohibited,
which allows it to avoid imposing margin requirements.
But it is still easy to emulate short positions
(e.g., a short position in the vote share for the Democratic Nominee
can be modelled as a long position in the vote share for the Republican Nominee).

A natural question is how a futures market is started;
namely how to make the order book non-empty.
In the IEM, the market participants are allowed to buy \emph{fixed price bundles}
for a given price.
For example, such a bundle might contain the vote share for the Democratic Nominee
and the vote share for the Republican Nominee,
with a fixed price of $\$1$
(the sum of the two vote shares is 1,
and so the final pay-off of the bundle is known to be $\$1$).

\section{Radical probabilism}
\label{app:radical}

Our testing protocols,
such as Protocol~\ref{prot:joint-test-final},
assume that we learn the observations $y_n$ with full certainty.
According to Jeffrey's doctrine of radical probabilism \citep{Jeffrey:1992},
we do not learn anything for certain;
at best, we learn that the $n$th observation is $y_n$
with a high probability.
The uncertainty of observations
is a recurring topic in the philosophy of science.
See, e.g., Popper's discussion of ``basic statements''
in \citet[Chap.~5]{Popper:1959}
(where he also refers to Reininger's and Neurath's similar ideas)
and \citet{Andersson:2016}.
In this section we will discuss two modifications
of Protocol~\ref{prot:joint-test-final}
allowing uncertain evidence.

\subsection{Additive picture}
\label{subsec:radical-additive}

A straightforward modification of Protocol~\ref{prot:merged}
making evidence uncertain is the following one.

\begin{protocol}\label{prot:radical-additive}\ \\
  \indentI $\K_0 := 1$\\
  \indentI FOR $n=1,2,\dots$:\\
    \indentII Forecaster announces $Q_n\in\mathfrak{P}(\mathbf{Y}^{N})$\\
    \indentII IF $n>1$:\\
    \indentIII $\K_{n-1} := \K_{n-2}
        + \sum_{x\in\mathbf{Y}^{N}} F_{n-1}(x) (Q_{n}(x) - Q_{n-1}(x))$
        \text{\quad}\hfill\refstepcounter{equation}(\theequation)\label{eq:radical-add}\\
    \indentII Sceptic announces $F_n\in\R^{\mathbf{Y}^{N}}$.
\end{protocol}

\noindent
Whereas the loops in Protocols~\ref{prot:joint-test-final} and~\ref{prot:merged}
are over finite ranges of $n$,
in Protocol~\ref{prot:radical-additive} the loop is infinite
since we do not learn any of $y_1,\dots,y_N$ with certainty.
Even though in Protocol~\ref{prot:radical-additive} $y_n$ are never disclosed explicitly,
they may be disclosed implicitly via $Q_n$: cf.\ \eqref{eq:Q}.
The capital updating rule \eqref{eq:radical-add} is very natural:
namely, a possible interpretation of this rule
is that $Q_{n-1}$ is the expectation of $Q_{n}$
(cf.\ \citealt[Theorem in Sect.~3]{Goldstein:1983}).

Notice that Protocol~\ref{prot:radical-additive}
is only a modification, not  generalization, of Protocol~\ref{prot:merged}:
whereas $Q_n$ in the former protocol is required to be positive,
it is not positive in the latter protocol
(being positive is incompatible with possessing certain evidence).

\subsection{Multiplicative picture}

Protocol~\ref{prot:radical-additive}
and all the protocols discussed in the main part of the paper
are similar to Protocol~\ref{prot:joint-test-final}
in that Sceptic's capital is updated by adding various terms.
This subsection introduces a multiplicative protocol,
in which Sceptic's capital is updated by multiplication.
Both multiplicative and additive protocols
are ubiquitous in game-theoretic probability
(although the difference between them is rarely pointed out).
This is the multiplicative version of Protocol~\ref{prot:radical-additive}:

\begin{protocol}\label{prot:radical-multiplicative}\ \\
  \indentI $\K_0 := 1$\\
  \indentI FOR $n=1,2,\dots$:\\
    \indentII Forecaster announces $Q_n\in\mathfrak{P}(\mathbf{Y}^{N})$\\
    \indentII IF $n>1$:\\
    \indentIII $\K_{n-1} := \K_{n-2}
      \sum_{x\in\mathbf{Y}^{N}} \frac{Q_{n}(x)}{Q_{n-1}(x)} G_{n-1}(x)$
      \hfill\refstepcounter{equation}(\theequation)\label{eq:radical-multi}\\
    \indentII Sceptic announces $G_n\in\R^{\mathbf{Y}^{N}}$.
\end{protocol}

As usual, $\K_n$ is not allowed to become negative.
To see the equivalence of the additive and multiplicative protocols,
notice that \eqref{eq:radical-multi} is equivalent to
\begin{align*}
  \K_{n-1} - \K_{n-2}
  &=
  \left(
    \sum_{x\in\mathbf{Y}^{N}}
    \frac{Q_{n}(x)}{Q_{n-1}(x)} G_{n-1}(x)
    -
    1
  \right)
  \K_{n-2}\\
  &=
  \left(
    \sum_{x\in\mathbf{Y}^{N}}
    (Q_{n}(x)-Q_{n-1}(x)) \frac{G_{n-1}(x)}{Q_{n-1}(x)}
  \right)
  \K_{n-2}.
\end{align*}
This establishes the one-to-one correspondence
\begin{equation}\label{eq:correspondence}
  F_{n-1}(x)
  =
  \frac{G_{n-1}(x)}{Q_{n-1}(x)}
  \K_{n-2}
\end{equation}
between $F_{n-1}$ in \eqref{eq:radical-add} and $G_{n-1}$ in \eqref{eq:radical-multi}.
The correspondence~\eqref{eq:correspondence} assumes that $\K_{n-2}>0$,
and the case $\K_{n-2}=0$ should be considered separately
(Sceptic' capital will stay at 0 once it reaches 0 in either protocol).

We obtain a useful modification of Protocol~\ref{prot:radical-multiplicative}
replacing $G_n\in\R^{\mathbf{Y}^{N}}$ in the last line
by $G_n\in\mathfrak{P}(\mathbf{Y}^{N})$.
Then the multiplicative protocol becomes a special case of Cover's protocol
modelling investment into $\lvert\mathbf{Y}\rvert^N$ securities
such as stocks (see, e.g., \citealt{Cover:1991} or \citealt[Example 9]{Vovk:1998}).
As in Sect.~\ref{sec:test},
we have a market in securities $\Phi(x)$, $x\in\mathbf{Y}^N$,
but they may be never settled.
For each security $\Phi(x)$ the protocol gives
its price $Q_n(x)$ at time $n$.
The prices are normalized in that $Q_n(x)$ sum to $1$ over $x$;
e.g., $Q_n(x)$ may be the market shares.
The capital update rule \eqref{eq:radical-multi}
involves the \emph{price relative} $Q_{n}(x)/Q_{n-1}(x)$
(as used in \citealt{Cover:1991}).
At each step Sceptic decides on the distribution $G_n$
of his current capital $\K_{n-1}$
among the securities $\Phi(x)$.
If $G_n\in\mathfrak{P}(\mathbf{Y}^{N})$,
we do not allow ``short selling'',
i.e., holding a negative amount of a security,
and we require Sceptic to invest all of his capital.
In general, allowing any $G_n\in\R^{\mathbf{Y}^{N}}$
we allow both short selling
and leaving part (positive or negative) of Sceptic's capital
on a zero-interest bank account.

\subsection{Radical probabilism and reality}

The additive picture and, especially, the multiplicative one
shed new light on the protocols in the main part of the paper.
The latter cover the case where $Q_n$, $n=1,\dots,N$,
is concentrated on $[x]\subseteq\mathbf{Y}^N$
(the set of all continuations of $x$)
for some $x\in\mathbf{Y}^{N-n+1}$.
The difference between radical probabilism
and the standard Bayesian scenario considered in the main paper
corresponds to the difference between stocks and futures contracts.
Sooner or later, reality settles a futures contract,
but stock prices can be forever variable (in our ideal picture).

It would be interesting to establish conditions
under which this paper's results can be extended
to the more general and simpler protocols of this appendix.

\section{Measure-theoretic martingale law of large numbers}
\label{app:martingale-SLLN}

Our discussion of Bayesian decision theory in Sect.~\ref{sec:DM}
was based on a law of large numbers for predicting $K$ steps ahead.
This law of large numbers may also present an independent interest,
and the purpose of this appendix is to give
clean self-contained measure-theoretic statements of its various versions.
In this appendix we consider general probability spaces $(\Omega,\FFF,P)$,
not necessarily finite.

A \emph{filtration} $(\FFF_n)$, $n=0,1,\dots,N$, in a general probability space $(\Omega,\FFF,P)$
is still an increasing sequence of $\sigma$-algebras, $\FFF_0\subseteq\dots\subseteq\FFF_N$.
A sequence $Y_1,\dots,Y_N$ of random variables in $(\Omega,\FFF,P)$ is \emph{adapted}
if $Y_n$ is $\FFF_n$-measurable for $n=1,\dots,N$.
We usually assume $\lvert Y_n\rvert\le1$ for agreement with the assumption $\lambda_n\in[0,1]$
that we made in Sect.~\ref{sec:DM} about the loss functions:
$Y_n$ corresponds to a difference between two values of such a loss function $\lambda_n$.

Interestingly, we can get nearly optimal results by using the primitive idea
of decomposing forecasting $K$ steps ahead into $K$ processes of forecasting one step ahead,
as in Remark~\ref{rem:strong-law}.
This gives us the following proposition (analogous to Theorem~\ref{thm:optimal}).

\begin{proposition}\label{prop:LLN}
  Let $(\Omega,\FFF,P)$ be a probability space
  equipped with a filtration $(\FFF_n)$, $n=0,1,\dots,N$.
  Fix a prediction horizon $K\in\{1,\dots,N\}$.
  Let $Y_1,\dots,Y_N$ be an adapted sequence of random variables
  in $(\Omega,\FFF,P)$
  bounded by 1 in absolute value, $\lvert Y_n\rvert\le1$ for $n=1,\dots,N$.
  Then we have, for any $\epsilon\in(0,0.7)$,
  \begin{equation}\label{eq:prop-LLN}
    P
    \left(
      \left|
        \sum_{n=K}^N
        \left(
          Y_n - \E_{P}(Y_n\mid\FFF_{n-K})
        \right)
      \right|
      \ge
      4 \sqrt{K N\ln\frac{1}{\epsilon}}
    \right)
    \le
    \epsilon.
  \end{equation}
\end{proposition}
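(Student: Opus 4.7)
The plan is to transplant the argument from Section~\ref{subsec:optimal} (the proof of Theorem~\ref{thm:optimal}) into the measure-theoretic setting. Set $D_n := Y_n - \E_P(Y_n \mid \FFF_{n-K})$ for $n = K,\dots,N$, so the quantity of interest is $S := \sum_{n=K}^N D_n$. As indicated in Remark~\ref{rem:strong-law}, the key structural observation is that $S$ splits into $K$ genuine martingales indexed by residue class modulo $K$: for each $r \in \{0,\dots,K-1\}$, the subsequence $(D_n)_{n \equiv r \pmod K}$ is a martingale-difference sequence with respect to the sub-filtration $(\FFF_{r+jK})_{j}$, because $\FFF_{n-K}$ is precisely the immediate predecessor of $\FFF_n$ inside that sub-filtration. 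Writing $S_r$ for the corresponding partial sum and $M_r$ for its number of increments, we have $S = \sum_{r=0}^{K-1} S_r$ with $M_r \leq \lceil N/K\rceil$.

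Because $|Y_n| \leq 1$, the conditional law of $Y_n$ given $\FFF_{n-K}$ is supported in an interval of width at most $2$, and Hoeffding's lemma gives the conditional MGF bound $\E_P(e^{\lambda D_n} \mid \FFF_{n-K}) \leq e^{\lambda^2/2}$. Iterating along the sub-filtration via the tower property and optimising over $\lambda$ yields the one-sided sub-Gaussian tail
\[
  P(S_r \geq u) \leq \exp\!\left(-\frac{u^2 K}{2N}\right), \qquad u \geq 0.
\]
I then follow the chain \eqref{eq:1and2}--\eqref{eq:6and7} verbatim, with $\UP$ replaced by $P$ and $\UE$ replaced by $\E_P$: for any $C > 0$ and $\gamma \in (0,1)$, setting $\alpha := (1-\gamma) C/K$ in Lemma~\ref{lem:EP} gives the deterministic inequality $1_{\{S \geq C\}} \leq \sum_r g(S_r)$; taking expectations and using the Fubini identity $\E_P g(S_r) = \int_0^\infty P(g(S_r) \geq u)\,du$ reduces the problem to the Gaussian integral already computed in Section~\ref{subsec:optimal} (substitute $v := (\gamma C/K + (1-\gamma) C u)\sqrt{K/N}$ and apply the Feller bound $\sqrt{2\pi}(1-\Phi(z)) \leq e^{-z^2/2}/z$), producing
\[
  P(S \geq C) \leq \frac{KN}{\gamma(1-\gamma) C^2} \exp\!\left(-\frac{\gamma^2 C^2}{2KN}\right).
\]
Applying the same argument with $-Y_n$ in place of $Y_n$ yields the matching bound on $P(-S \geq C)$, and the union of the two tails inflates the right-hand side by a factor of $2$.

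To finish, take $C := 4\sqrt{KN \ln(1/\epsilon)}$ and $\gamma := 1/2$; then $\gamma^2 C^2/(2KN) = 2 \ln(1/\epsilon)$ and the two-sided bound simplifies to $\epsilon^2/(2\ln(1/\epsilon))$, which is at most $\epsilon$ precisely when $\epsilon \leq 2\ln(1/\epsilon)$. Numerically this holds throughout $(0, 0.7)$ (the threshold is just above $0.703$), which is exactly the range in the proposition's hypothesis. The main obstacle is calibrating the constants so that the stated range for $\epsilon$ is actually attained: the prefactor $4$ has to absorb both the two-sided union and the fact that $|D_n| \leq 2$, so it is essential to use the sharp Hoeffding-lemma constant $e^{\lambda^2/2}$ derived from the range $2$ of $Y_n$ itself, rather than the naive $e^{2\lambda^2}$ one would get by treating $|D_n| \leq 2$ as a range-$4$ increment---otherwise the exponent would degrade by a factor of $4$ and the admissible range of $\epsilon$ would shrink well below $0.7$.
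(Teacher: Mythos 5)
Your proof is correct and follows essentially the same route as the paper's: the same residue-class-mod-$K$ decomposition, the same Azuma--Hoeffding tail $\exp(-u^2K/(2N))$ per class, and the same risk-aggregation step (your Lemma~\ref{lem:EP}/Lemma~\ref{lem:representation} chain with $\gamma=1/2$ is exactly the Embrechts--Puccetti bound~\eqref{eq:E} evaluated at $t=C/(2K)$ that the paper cites, yielding the identical bound $\frac{4KN}{C^2}\exp(-C^2/(8KN))$ and the identical condition $\epsilon\le2\ln\frac1\epsilon$). One trivial point: the number of increments in each residue class of $\{K,\dots,N\}$ is at most $\lfloor N/K\rfloor$, not merely $\lceil N/K\rceil$, and it is the former that justifies the stated tail $\exp(-u^2K/(2N))$.
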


\begin{proof}
  In this proof we will need one result from robust risk aggregation
  (this theory was originated by Kolmogorov \citep{Makarov:1981};
  it is briefly described in \citealt[Remark~2]{Vovk/Wang:2020}
  and then widely used in that paper).
  Namely, we will need the following special case of Theorem~4.2
  of \citet{Embrechts/Puccetti:2006}.

  Suppose nonnegative random variables $X_k$, $k=1,\dots,K$,
  satisfy
  \begin{equation}\label{eq:marginals}
    \P(X_k\ge x)
    =
    \exp(-a x^2)
  \end{equation}
  for all $x\ge0$,
  where $a$ is a positive constant.
  The value $E$ of the optimization problem
  \begin{equation}\label{eq:optimization}
    \P(X_1+\dots+X_K\ge C)
    \to
    \max
  \end{equation}
  (the $\max$, or at least $\sup$,
  being over all joint distributions for $(X_1,\dots,X_K)$
  with the given marginals)
  does not exceed
  \begin{equation}\label{eq:E}
    E
    :=
    \inf_{t<C/K}
    \frac{K\int_t^{C-(K-1)t}\exp(-a x^2)\d x}{C-K t}.
  \end{equation}

  We can extend the statement in the previous paragraph
  to a wider class of random variables $X_k$, $k=1,\dots,K$.
  Namely, it suffices to assume that they satisfy
  \begin{equation}\label{eq:relaxed-marginals}
    \P(X_k\ge x)
    \le
    \exp(-a x^2)
  \end{equation}
  for all $x\ge0$, instead of \eqref{eq:marginals}.
  We will apply the statement to the random variables $X_k$ given by
  \begin{equation*}
    X_k
    :=
    \sum_{n\in\{k+K,k+2K,\dots,k+\lfloor N/K\rfloor K\}}
    \left(
      Y_{n} - \E_{P}(Y_{n}\mid\FFF_{n-K})
    \right).
  \end{equation*}
  By Hoeffding's inequality,
  for any $C>0$ and any $k\in\{0,\dots,K-1\}$,
  \begin{equation*}
    P
    \left(
      X_k
      \ge
      C
    \right)\\
    \le
    \exp
    \left(
      -C^2/(2\lfloor N/K\rfloor)
    \right)
    \le
    \exp
    \left(
      -C^2/(2N/K)
    \right),
  \end{equation*}
  where the non-existent terms in the sum
  (those corresponding to $n>N$ if any)
  are interpreted as 0.
  Therefore, \eqref{eq:relaxed-marginals} holds with
  \begin{equation}\label{eq:a}
    a
    :=
    \frac{K}{2N}.
  \end{equation}

  Let us set $t:=\frac{C}{2K}$ in \eqref{eq:E}
  (this is the middle of the range of $t$).
  This gives the upper bound
  \[
    \frac{2K}{C}
    \int_{\frac{C}{2K}}^{\infty}
    \exp(-a x^2)
    \d x
  \]
  for $E$,
  which can be rewritten (see below for an explanation) as
  \begin{align}
    &\frac{2K}{C}
    \frac{1}{\sqrt{2a}}
    \int_{\sqrt{2a}\frac{C}{2K}}^{\infty}
    \exp(-y^2/2)
    \d y
    =
    \frac{2K}{C}
    \frac{\sqrt{2\pi}}{\sqrt{2a}}
    \bar\Phi
    \left(
      \sqrt{2a}\frac{C}{2K}
    \right)\label{eq:notation}\\
    &=
    \frac{2\sqrt{2\pi}\sqrt{K N}}{C}
    \bar\Phi
    \left(
      \frac{C}{2\sqrt{K N}}
    \right)
    <
    \frac{4 K N}{C^2}
    \exp
    \left(
      -\frac{C^2}{8 K N}
    \right)\label{eq:Feller}.
  \end{align}
  The first expression in~\eqref{eq:notation}
  is obtained by the substitution $y:=\sqrt{2a}x$,
  the equality in \eqref{eq:notation} uses the notation $\bar\Phi$
  for the survival function of the standard Gaussian distribution,
  the following equality (the one in~\eqref{eq:Feller})
  is obtained by plugging in \eqref{eq:a},
  and the final inequality in~\eqref{eq:Feller}
  follows from the usual upper bound for $\bar\Phi$
  \citep[Lemma~VII.1.2]{Feller:1968}.

  To find a suitable solution to the inequality
  \[
    \frac{4 K N}{C^2}
    \exp
    \left(
      -\frac{C^2}{8 K N}
    \right)
    \le
    \frac{\epsilon}{2},
  \]
  we plug in $C=\sqrt{8 K N \ln\frac{1}{\epsilon} x}$
  (intuitively, $x\approx1$)
  obtaining, after simplification,
  \[
    \epsilon^{x-1}
    \le
    x \ln\frac{1}{\epsilon}.
  \]
  Assuming $\epsilon<0.7$,
  we can set $x:=2$.
\end{proof}

\begin{remark}
  In the proof of Proposition~\ref{prop:LLN}
  we did not make any attempt to optimize
  the coefficient $4$ in \eqref{eq:prop-LLN}.
  However, the same argument shows that 4
  can be replaced by a number as close to $\sqrt{2}$ as we wish
  if we narrow down the permitted range of $\epsilon$
  (leaving the lower end of the range at 0, of course).
\end{remark}

\begin{remark}
  Since the bound $E$ in \eqref{eq:E} plays an important role
  in this appendix (and implicitly in Appendix \ref{subsec:optimal}),
  it is reassuring to know that in many interesting cases
  $E$ actually coincides with the value of the optimization problem~\eqref{eq:optimization}.
  This is shown in Theorem~2.3 by \citet{Puccetti/Ruschendorf:2013}.
  (And the restatement of Embrechts and Puccetti's result
  in \citealt[Sect.~1]{Puccetti/Ruschendorf:2013},
  is particularly convenient.)
  One of the cases \citep[Sect.~3]{Puccetti/Ruschendorf:2013}
  in which $E$ is the value of the optimization problem
  is where the probability density function of $X_k$
  is monotonically decreasing over its domain $[0,\infty)$.
  This condition, however, is only satisfied
  for $x\ge1/\sqrt{2a}$
  (the last condition becomes $x\ge\sqrt{N/K}$
  for the value of $a$, given in \eqref{eq:a},
  that we will be interested in).
\end{remark}

\begin{remark} 
  In the proof of Proposition~\ref{prop:LLN}
  we set $t:=\frac{C}{2K}$ in \eqref{eq:E}.
  In the arXiv version~2 of this paper,
  I used two other choices,
  $t\to\frac{C}{K}$ and $t:=0$,
  which led to weaker results
  (if we ignore the coefficient in front of the $\surd$ in \eqref{eq:prop-LLN}).
  Namely,
  the former choice is equivalent to using Bonferroni's inequality
  (as noticed by Embrechts and Puccetti
  \citep[Remark 4.1(i)]{Puccetti/Ruschendorf:2013}),
  and the latter choice gives a worse dependence of $\epsilon$,
  namely $\epsilon^{-2}$ in place of $\ln\frac1\epsilon$.
\end{remark}

Let us state Proposition~\ref{prop:LLN} in a cruder way.
Now we consider a sequence of probability spaces $(\Omega_N,\FFF_N,P_N)$,
$N=1,2,\dots$, each equipped with a filtration $(\FFF_{N,n})$, $n=0,1,\dots,N$.
Fix a sequence $K_N\in\{1,\dots,N\}$, $N=1,2,\dots$, of prediction horizons.
Let, for each $N$, $Y_{N,1},\dots,Y_{N,N}$ be an adapted sequence of random variables
in $(\Omega_N,\FFF_N,P_N)$
bounded by 1 in absolute value, $\lvert Y_{N,n}\rvert\le1$ for $n=1,\dots,N$.
When I say that a relation $R_N(O(X_N))$ involving $O(X_N)$
(such as \eqref{eq:LLN-simple} below)
holds in probability,
I mean that for any $\epsilon>0$ there exists $C>0$
such that $P_N(R_N(C X_N))\ge1-\epsilon$ from some $N$ on.%
\footnote{Of course, this definition makes an intuitive sense
  only when the statement $R_N(x)$ becomes weaker as $x$ increases.}
According to~\eqref{eq:prop-LLN},
\begin{equation}\label{eq:LLN-simple}
  \left|
    \sum_{n=K_N}^N
    \left(
      Y_{N,n} - \E_{P_N}(Y_{N,n}\mid\FFF_{N,n-K_N})
    \right)
  \right|
  =
  O
  \left(
    \sqrt{K_N N}
  \right)
\end{equation}
in probability.
An even cruder form of \eqref{eq:LLN-simple}
(and of Proposition~\ref{prop:LLN})
is the following corollary.
\begin{corollary}
  Let $(\Omega_N,\FFF_N,P_N)$, $N=1,2,\dots$,
  be a sequence of probability spaces $(\Omega_N,\FFF_N,P_N)$
  each equipped with a filtration $(\FFF_{N,n})$, $n=0,1,\dots,N$.
  Suppose the sequence $K_N\in\{1,\dots,N\}$, $N=1,2,\dots$, of prediction horizons
  satisfies $K_N=o(N)$.
  Let, for each $N$, $Y_{N,1},\dots,Y_{N,N}$ be an adapted sequence of random variables
  in $(\Omega_N,\FFF_N,P_N)$
  bounded by 1 in absolute value.
  Then
  \begin{equation}\label{eq:cor-LLN}
    \left|
      \frac{1}{N-K_N+1}
      \sum_{n=K_N}^N
      \left(
        Y_{N,n} - \E_{P_N}(Y_{N,n}\mid\FFF_{N,n-K_N})
      \right)
    \right|
    \to
    0
    \quad
    (N\to\infty)
  \end{equation}
  holds in probability.
\end{corollary}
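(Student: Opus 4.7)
The proof plan is to derive this corollary as an immediate consequence of Proposition~\ref{prop:LLN} applied to each probability space in the sequence. Fix an arbitrary $\delta > 0$; to establish convergence in probability it suffices to show that the probability on the left-hand side of~\eqref{eq:cor-LLN} (with ``$\ge\delta$'' in place of ``$\to0$'') tends to $0$ as $N \to \infty$. First I would pick any $\epsilon \in (0, 0.7)$ and apply Proposition~\ref{prop:LLN} to $(\Omega_N, \FFF_N, P_N)$, the filtration $(\FFF_{N,n})_{n=0}^N$, the adapted sequence $(Y_{N,n})_{n=1}^N$, and the prediction horizon $K_N$, obtaining
\[
  P_N\!\left(\left|\sum_{n=K_N}^N \!\left(Y_{N,n} - \E_{P_N}(Y_{N,n} \mid \FFF_{N,n-K_N})\right)\right| \geq 4\sqrt{K_N N \ln(1/\epsilon)}\,\right) \leq \epsilon.
\]

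Next I would divide the threshold defining this event by $N - K_N + 1$. Since $K_N = o(N)$, we have $N - K_N + 1 \geq N/2$ for all sufficiently large $N$, so
\[
  \frac{4\sqrt{K_N N \ln(1/\epsilon)}}{N - K_N + 1} \leq 8 \sqrt{\frac{K_N}{N} \ln\frac{1}{\epsilon}} \longrightarrow 0 \quad (N \to \infty)
\]
for each fixed $\epsilon$. Consequently, for all $N$ large enough (depending on $\delta$ and $\epsilon$), this ratio lies below $\delta$, and then the probability in~\eqref{eq:cor-LLN} is bounded above by the probability from Proposition~\ref{prop:LLN} and hence does not exceed $\epsilon$. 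Since $\epsilon \in (0, 0.7)$ is arbitrary, this gives $\limsup_{N \to \infty} P_N(\lvert\cdot\rvert \ge \delta) = 0$, establishing the claimed convergence in probability.

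The main ``obstacle'' here is essentially absent: all the substantive work is already packed into Proposition~\ref{prop:LLN}, and the present argument is merely the standard passage from an exponential tail bound to a law of large numbers in probability. The only minor subtlety worth flagging is the restriction $\epsilon \in (0, 0.7)$ in Proposition~\ref{prop:LLN}, but since our reduction only needs $\epsilon$ to range down to $0$, this restriction is harmless.
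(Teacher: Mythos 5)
Your proof is correct and takes exactly the route the paper intends: the paper states this corollary as an immediate ``cruder form'' of Proposition~\ref{prop:LLN} (via~\eqref{eq:LLN-simple}) without writing out the deduction, and your argument simply makes that deduction explicit. The normalization step using $K_N=o(N)$ and the passage from the exponential tail bound to convergence in probability are both handled correctly.
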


\noindent
Remember that when we say that random variables $\xi_N$
in probability spaces $(\Omega_N,\FFF_N,P_N)$
converge to 0 in probability,
as in \eqref{eq:cor-LLN},
we mean that, for any $\delta>0$,
$P_N(\lvert\xi_N\rvert>\delta)\to0$ as $N\to\infty$.

The following proposition
(analogous to Proposition~\ref{prop:lower-bound-1})
is an inverse to \eqref{eq:LLN-simple}.
To make it slightly stronger,
we state it for finite probability spaces.
\begin{proposition}\label{prop:anti-LLN}
  There exist $\epsilon>0$,
  a sequence of finite probability spaces $(\Omega_N,P_N)$, $N=1,2,\dots$,
  each equipped with a filtration $(\FFF_{N,n})$, $n=0,1,\dots,N$,
  and, for each $N$,
  an adapted sequence $Y_{N,1},\dots,Y_{N,N}$ of random variables in $(\Omega_N,P_N)$
  bounded by 1 in absolute values, $\lvert Y_{N,n}\rvert\le1$ for $n=1,\dots,N$,
  such that, for any sequence $K_N\in\{1,\dots,\lfloor N/5\rfloor\}$, $N=5,6,\dots$,
  and for all $N\ge5$,
  we have
  \begin{equation*}
    P_N
    \left(
      \sum_{n=K_N}^N
      \left(
        Y_{N,n} - \E_{P_N}(Y_{N,n}\mid\FFF_{N,n-K_N})
      \right)
      \ge
      \sqrt{K_N N}
    \right)
    \ge
    \epsilon.
  \end{equation*}
\end{proposition}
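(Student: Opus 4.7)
The plan is to adapt the block-structured construction from the proof of Proposition~\ref{prop:lower-bound-1} to the measure-theoretic setting, with care that the probability space and the process are fixed before $K_N$ is revealed. The key new ingredient is to place an auxiliary random block-size parameter $K^*$ into the sample space alongside the observations, so that the single adapted process $(Y_{N,n})$ carries a block structure whose scale is randomized.

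Construction. For each $N\ge 5$, let $\Omega_N := \{0,1\}^N \times \{1,2,\ldots,\lfloor N/5\rfloor\}$ with the uniform product probability $P_N$; denote a generic sample point by $(y_1,\ldots,y_N,K^*)$, and take the filtration $\FFF_{N,n} := \sigma(K^*, y_1,\ldots,y_n)$, so $K^*$ is revealed from time~$0$. Define
\[
  Y_{N,n} := 2\, y_{\lfloor n/K^*\rfloor\, K^*} - 1
\]
(with the convention $y_0 := 1$). This process is $\{-1,+1\}$-valued, adapted, and piecewise constant on the random $K^*$-blocks.

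Conditional analysis. Fix a test horizon $K = K_N$ and condition on $\{K^* = K\}$. For $n$ in any block $[jK,(j+1)K)$ with $j\ge 1$, the representative index $jK$ is strictly greater than $n-K$, so $y_{jK}$ is not $\FFF_{N,n-K}$-measurable and by independence of the $y$'s one has $\E_{P_N}(Y_{N,n} \mid \FFF_{N,n-K}) = 0$. Hence $X_n := Y_{N,n} - \E_{P_N}(Y_{N,n}\mid\FFF_{N,n-K}) = 2y_{jK}-1$ throughout the block, and collecting blockwise gives $\sum_{n=K}^N X_n = K\sum_{j}(2y_{jK}-1) + O(K)$. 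This is $K$ times a sum of $\Theta(N/K)$ i.i.d.\ $\pm 1$ variables, so its variance is $\Theta(KN)$, and by the central limit theorem $P_N\bigl(\sum_{n=K}^N X_n \ge \sqrt{KN}\bigm| K^* = K\bigr) \ge c$ for some universal $c > 0$.

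Main obstacle. Combined with $P_N(K^* = K) = 1/\lfloor N/5\rfloor$, the above yields only $P_N(\sum X_n \ge \sqrt{KN}) \ge 5c/N$, which degrades in $N$ and does not produce the universal $\epsilon > 0$ required. The hard step is therefore to upgrade this into a genuine constant. Natural refinements are available but lossy: a log-uniform law on $K^*$ delivers only $\Omega(1/\log N)$, and convex combinations of block processes at different scales lose a $\sqrt{\log N}$ factor in the deviation because of the constraint $|Y_{N,n}|\le 1$. To obtain the universal $\epsilon$, the analysis must either be refined to show that the tail bound already holds on a much larger event than $\{K^* = K\}$ (for instance on $\{K^* \ge K\}$, by tracking the conditional standard deviation $K\sqrt{N/K^*}$ against the target $\sqrt{KN}$ over a range of $K^*$), or the construction of $(Y_{N,n})$ must be replaced by a nonlinear multiscale object whose $K$-horizon martingale difference has variance saturating $\Theta(KN)$ for every admissible $K$ simultaneously.
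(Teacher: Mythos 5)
Your block construction and the CLT step are exactly the paper's argument: the paper's proof fixes independent $\pm1$-valued variables $X_1,\dots,X_{\lceil N/K_N\rceil}$, sets $Y_{N,n}:=X_{\lceil n/K_N\rceil}$ with $\FFF_{N,n}$ generated by $Y_{N,1},\dots,Y_{N,n}$, observes that the conditional expectations vanish so the sum collapses to $K_N$ times a simple random walk of length about $N/K_N$, and applies the central limit theorem to get a universal $\epsilon$. What you have added---the auxiliary randomized block size $K^*$---is an attempt to make a \emph{single} process work for every admissible sequence $K_N$ simultaneously, and, as you candidly record, that attempt does not close: conditioning on $\{K^*=K_N\}$ costs a factor $1/\lfloor N/5\rfloor$, the universal constant is lost, and none of the refinements you sketch is carried out. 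So under the reading you have adopted, your proposal is incomplete.

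The resolution is that the paper does not adopt that reading. In its proof the probability space, the filtration and the process all depend on the given sequence $K_N$; the quantifier order actually established is ``for any sequence of prediction horizons there exists an example,'' which is all that is needed for the proposition's purpose, namely showing that the rate $\sqrt{K_N N}$ in Proposition~\ref{prop:LLN} cannot be improved. Once you drop the requirement that one process serve all $K_N$ at once, your ``conditional analysis'' paragraph, run unconditionally with a deterministic block size $K_N$ (and checking, as the paper does, that $K_N\le N/5$ leaves enough blocks for the CLT to give a constant), \emph{is} the complete proof, and the entire ``main obstacle'' discussion becomes moot. Whether a single multiscale process can saturate the $\sqrt{KN}$ deviation for all $K$ simultaneously is an interesting side question, but neither the statement's intent nor the paper's proof addresses it, and you should not let it block the argument.
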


\begin{proof}
  Fix independent $\{-1,1\}$-valued variables
  $X_1,\dots,X_{\lceil N/K_N\rceil}$ in $(\Omega_N,P_N)$
  taking values $\pm1$ with equal probabilities,
  and set
  \begin{equation*}
    Y_{N,n}
    :=
    X_{\lceil n/K_N\rceil},
    \quad
    n=1,\dots,N.
  \end{equation*}
  Therefore, the $N$ steps are split into $\lceil N/K_N\rceil$ blocks of length $K_N$
  (with a possible exception of the last block, which may be shorter),
  and $Y_{N,n}$ is constant within each block.
  By the central limit theorem,
  the probability is at least $\epsilon$ (a universal positive constant)
  that $Y_{N,n}=1$ in at least $\sqrt{N/K_N}+1$ more blocks than $Y_{N,n}=-1$.
  In such cases
  \[
    \sum_{n=K_N}^N
    \left(
      Y_{N,n} - \E_{P_N}(Y_{N,n}\mid\FFF_{N,n-K_N})
    \right)
    =
    \sum_{n=K_N}^N Y_{N,n}
    \ge
    K_N \sqrt{N/K_N}
    =
    \sqrt{K_N N},
  \]
  where each $\FFF_{N,n}$ is generated by $Y_{N,1},\dots,Y_{N,n}$.
\end{proof}

\begin{remark}\label{rem:inefficient}
  One inefficient approach to the $K$-steps ahead martingale law of large numbers
  (used in the arXiv version~1 of this paper
  and already alluded to in Remark~\ref{rem:strong-law})
  is to apply Hoeffding's inequality 
  to the martingale difference
  \[
    X_n
    :=
    \sum_{i=n}^{(n+K_N-1)\wedge N}
    \left(
      \E_{P_N}(Y_{N,i}\mid\FFF_{N,n})
      -
      \E_{P_N}(Y_{N,i}\mid\FFF_{N,n-1})
    \right),
  \]
  whose increments are bounded by $2K_N$ in absolute value.
  It is a martingale difference in the sense $\E(X_{n}\mid\FFF_{N,n-1})=0$, $n=1,\dots,N$,
  and it satisfies
  \begin{align*}
    \sum_{n=1}^N
    X_n
    &=
    \sum_{n=K_N}^{N}
    \left(
      Y_{N,n}
      -
      \E_{P_N}(Y_{N,n}\mid\FFF_{N,n-K_N})
    \right)\\
    &\qquad{}+
    \sum_{n=N+1}^{N+K_N-1}
    \left(
      \E_{P_N}(Y_{N,n}\mid\FFF_{N,N})
      -
      \E_{P_N}(Y_{N,n}\mid\FFF_{N,n-K_N})
    \right)\\
    &\approx
    \sum_{n=K_N}^{N}
    \left(
      Y_{N,n}
      -
      \E_{P_N}(Y_{N,n}\mid\FFF_{N,n-K_N})
    \right)
  \end{align*}
  (where the $\approx$ assumes $K_N\ll N$ and ignores borderline effects).
  This argument, however, requires $K_N=o(N^{1/2})$.
\end{remark}
\end{document}